\newcommand{\defeq}{\triangleq}
\newcommand{\ringunit}{\textbf 1}
\newcommand{\ringzero}{\textbf 0}
\newcommand{\aff}{\mathrm{Aff}_1^c}
\newcommand{\lexpt}{\mathsf{exp}}
\newcommand{\lvart}{\mathsf{var}}
\newcommand{\lcomt}{\mathsf{com}}
\newcommand{\lacct}{\mathsf{acc}}
\newcommand{\lcompc}{\mathsf{comp}}
\newcommand{\lopc}{\mathsf{op}}
\newcommand{\lifc}{\mathsf{if}}
\newcommand{\lnewc}{\mathsf{new}}
\newcommand{\lskipc}{\mathsf{skip}}
\newcommand{\sbr}[1]{\llbracket {#1}\rrbracket}
\newtheorem{theorem}{Theorem}
\newtheorem{definition}[theorem]{Definition}
\newtheorem{lemma}[theorem]{Lemma}
\newtheorem{proposition}[theorem]{Proposition}
\newtheorem{example}[theorem]{Example}
\newcommand{\mv}[1]{\mathsf{#1}}
\newcommand{\keep}{\downharpoonright}
\newcommand{\catplus}{\varoplus}
\newcommand{\cattimes}{\varodot}
\newcommand{\constr}[2]{#1 \blacktriangleright #2}
\newcommand{\true}{\mathit{true}}
\begin{document}
\title{From bounded affine types to automatic timing analysis}
\author{
Dan R. Ghica \qquad Alex Smith\\
University of Birmingham
}
\maketitle
\begin{abstract}
Bounded linear types have proved to be useful for automated resource analysis and control in functional programming languages. In this paper we introduce an affine bounded linear typing discipline on a general notion of resource which can be modeled in a semiring. For this type system we provide both a general type-inference procedure, parameterized by the decision procedure of the semiring equational theory, and a (coherent) categorical semantics. This is a very useful type-theoretic and denotational framework for many applications to resource-sensitive compilation, and it represents a generalization of several existing type systems. As a non-trivial instance, motivated by our ongoing work on hardware compilation, we present a complex new application to calculating and controlling timing of execution in a (recursion-free) higher-order functional programming language with local store. 
\end{abstract}

\section{Resource-aware types and semantics}

The two important things about a computer program are what it computes and what resources it needs to carry out the computation successfully. Correctness of the input-output behavior of programs has been, of course, the object of much research from various conceptual angles: logical, semantical, type-theoretical and so on. Resource analysis has been conventionally studied for algorithms, such as time and space complexity, and for programs has long been a part of research in compiler optimization. 

An exciting development was the introduction of semantic~\cite{DBLP:conf/concur/Boudol93} and especially type theoretic~\cite{DBLP:conf/lics/Hofmann99a} characterizations of resource consumption in functional programming languages. Unlike algorithmic analyses, type based analysis are formal and can be statically checked for implementations of algorithms in concrete programming languages. Unlike static analysis, a typing mechanism is compositional which means that it supports, at least in principle, separate compilation and even a foreign function interface: it is an analysis based on signatures rather than implementations. 

Linear logic and typing, because of the fine-grained treatment of resource-sensitive structural rules, constitute an excellent framework for resource analysis, especially in its bounded fragment~\cite{girard1992bounded}, which can logically characterize polynomial time computation. Bounded Linear Logic (BLL) was subsequently extended to improve its flexibility while retaining poly-time~\cite{DBLP:conf/tlca/LagoH09} and further extensions to linear \emph{dependent} typing were used to completely characterize complexity of evaluation of functional programs \cite{DBLP:journals/corr/abs-1104-0193}. 

Although such analyses use \emph{time} as a motivating example, they can be readily adapted to other \emph{consumable} resources such as energy or network traffic. A slightly different angle on resource sensitivity is control of \emph{reusable} resources which can be allocated and de-allocated at runtime, the typical example of which is \emph{memory}, especially \emph{local} (stack-allocated) memory. A well-behaved program will leave the stack empty upon termination, so talking about the total usage of stack-allocated memory is meaningless. Also, talking about the total number of allocations (\emph{push}) on the stack is rarely interesting. What is interesting is that the maximum size of the stack, which is bounded on most architectures, is not exceeded. For reusable resources the relevant limits are, therefore, concerning the \emph{rate} at which the resource is used, for example power (as opposed to energy) or bandwidth (as opposed to total network traffic). In previous work, the first author used a BLL-like type system to bound the number of simultaneous concurrent threads in a parallel functional programming language in order to extract finite models~\cite{DBLP:journals/tcs/GhicaMO06}. This view of concurrent threads as a (reusable) resource proved to be instrumental in facilitating the compilation of functional-imperative programming languages directly into electronic circuits~\cite{DBLP:conf/popl/Ghica07} and is closely related (conceptually, if not formally) to the use of sub-linear runtime space restrictions~\cite{DBLP:conf/aplas/LagoS10}. 

As type systems become more sophisticated the burden on the programmer may increase correspondingly, unless type inference is used to automate the typing process. In the case of bounded linear types the bounds can be calculated fully automatically, by solving a system of numeric constraints~\cite{DBLP:conf/popl/GhicaS11}. In the case of dependent typing this procedure is not decidable, but reduction to constraint systems can still greatly simplify the typing burden~\cite{DBLP:conf/popl/LagoP13}. 

Resource-awareness can be usually captured quite well by operational models of programming languages or typing systems. This is a common feature of the work cited above. A notable exception is the use of game semantics as a \emph{denotational} framework for resource sensitivity, which was introduced by the first author~\cite{DBLP:conf/popl/Ghica05} and recently formulated in a more abstract denotational setting~\cite{lairdmmp13}. 

\section{Contribution and paper outline}

The first part of our present work generalizes bounded linear (or, rather, affine) type systems to an abstract notion of resource, so long as it can be modeled in a semiring. For this abstract type system we show how the problem of type inference can be reduced to a system of constraints based on the equational theory of the resource semiring. Provided this theory is decidable, a type inference algorithm automatically follows. Also for the abstract type system we give a simple categorical framework for which we prove the key result of \emph{coherence}. Because meaning is calculated inductively on the derivation of the typing judgment, and because these derivations are not unique, coherence is the property guaranteeing that all these interpretations are actually equal. Coherence for a categorical semantics is the analogue of a subject reduction lemma in an operational semantics, the basic guarantee of its well-formedness. 

The second part of our work presents a non-trivial application to timing analysis and automated pipelining of computations in a recursion-free functional programming language with local store. The key notion of resource is that of a \emph{schedule} of a computation, i.e.\ the multiset of \emph{stages}, as defined by the start and end of computation, at which a term undergoes execution. Mathematically, stages are contractive affine transformations representing a sub-interval of the unit interval, taken conventionally as the overall duration of execution of the entire program. The resource reading of duration makes good intuitive sense in our target application, automated pipelining, as each stage in a pipeline can be seen as a reusable resource which is either free or busy at any given time. Both the type inference and the categorical semantics are applicable to a variety of resource-sensitive type systems and semantics, generalizing prior work such as~\cite{DBLP:conf/popl/GhicaS11}. 

Finally, we give a game-semantic model for the (concrete) type system in order to justify it computationally. The game-semantic model is denotational therefore compositional by construction, and the categorical semantics ensures that it provides a reasonable interpretation. We do not provide a conventional operational semantics because the game semantics provides enough operational content to be directly usable in the definition of a compiler as proved practically by our previous work on hardware synthesis (\emph{loc.\ cit.}) and more formally in forthcoming work on constructing abstract machines from game semantics~\cite{fredrikssong13}. Moreover, the game semantics   provides an immediate model for foreign function interfaces, which is essential in the development of a useful compiler~\cite{DBLP:conf/memocode/Ghica11}.

\subsection{Related work}

The problem of calculating timing bounds for program execution has been studied extensively. In functional languages it is especially relevant for reactive~\cite{wan2001real} and syn\-chronous \cite{pilaud1987lustre} programming. A variety of methods have been proposed, from static analysis~\cite{liu1998automatic} to full dependent types~\cite{Crary:2000:RBC:325694.325716}. The defining feature of our work is the fact that it is type-based and offers fully automated inference, so requires no annotations or additional effort from the programmer. The application to pipelining is also suitable in terms of our restriction to recursion-free programming, as pipelining is most commonly used as an optimization for finite unfolding of recursive (or iterative) terms. 

\section{Bounded affine types, a general framework}\label{sub:agf}
 Types are generated by the grammar $\theta::=\sigma\mid (J\cdot\theta)\multimap\theta$, where $\sigma$ is a fixed collection of base types and $J\in\mathcal J$, where $(\mathcal J,+,\times,\ringzero,\ringunit)$ is a semiring. We will always take $\cdot$ to bind strongest so we will omit the brackets. 

Let $\Gamma= x_1{:}J_1{\cdot}\theta_1,\ldots,x_n{:}J_n{\cdot} \theta_n$ be a list of identifiers $x_i$ and types $\theta_i$, annotated with semiring elements $J_i$.  Let $fv(M)$ be the set of free variables of term $M$, defined in the usual way. The typing rules are:
\begin{center}
  \AxiomC{ }
  \RightLabel{Identity}
  \UnaryInfC{$x:\mathbf 1\cdot\theta\vdash x:\theta$}
  \DisplayProof\\[1.5ex]
  \AxiomC{$\Gamma\vdash M:\theta$}
  \RightLabel{Weakening}
  \UnaryInfC{$\Gamma,x:J\cdot\theta'\vdash M:\theta$}
  \DisplayProof\\[1.5ex]
  \AxiomC{$\Gamma,x:J\cdot\theta\vdash M:\theta'$}
  \RightLabel{Abstraction}
  \UnaryInfC{$\Gamma\vdash\lambda x.M:J\cdot\theta\multimap\theta'$}
  \DisplayProof\\[1.5ex]
  \AxiomC{$\Gamma\vdash M:J\cdot\theta\multimap\theta'$}
  \AxiomC{$\Gamma'\vdash N:\theta$}
  \RightLabel{Application}
  \BinaryInfC{$\Gamma,J\cdot\Gamma'\vdash MN:\theta'$}
  \DisplayProof\\[1.5ex]
  \AxiomC{$\Gamma,x:J\cdot\theta,y:K\cdot\theta\vdash
    M:\theta'$}
  \RightLabel{Contraction}
  \UnaryInfC{$\Gamma,x:(J+ K)\cdot\theta\vdash M[x/y]:\theta'$}
  \DisplayProof
\end{center}
In \emph{Weakening} we have the side condition $x\not\in fv(M)$, and in  \emph{Application} we require $\text{dom} (\Gamma) \cap \text{dom} (\Gamma') = \emptyset$.
In the \emph{Application} rule we use the notation
\begin{equation}
J\cdot(x_1:K_1\cdot\theta_1,\ldots,x_n:K_n\cdot \theta_n)
\defeq x_1:(J \times K_1)\cdot\theta_1,\ldots,x_n:(J \times K_n)\cdot\theta_n
\end{equation}

For the sake of simplicity we take operations in the semiring to be resolved \emph{syntactically} within the type system. So types such as $2\cdot A$ and $(1+1)\cdot A$ are taken to be syntactically equal. In the context of type-checking this is reasonable because ring actions are always constants that the type-checker can calculate with. If we were to allow resource variables, i.e.\ some form of resource-based polymorphism (cf.~\cite{DBLP:conf/tlca/LagoH09}) then a new structural rule would be required to handle type congruences induced by the semiring theory:
\begin{center}
  \AxiomC{$\Gamma,x:J\cdot\theta'\vdash M:\theta$}
  \AxiomC{$J=_{\mathcal J}J'$}
  \RightLabel{Semiring}
  \BinaryInfC{$\Gamma,x:J'\cdot\theta'\vdash M:\theta$}
  \DisplayProof
\end{center}
But in our current system this level of formalization is not worth the complication. 

\paragraph{Observation.} This is an affine type system where types are
decorated with resources taken from an arbitrary semiring. The new
rules are resource-oriented versions of contraction and application.
The similarity with BLL~\cite{girard1992bounded} and
SCC~\cite{DBLP:journals/tcs/GhicaMO06} is clear. If we instantiate $\mathcal J$ to resource
polynomials (and also remove weakening) we obtain BLL. If we instantiate $\mathcal J$ to the
semiring of natural numbers we get SCC. If $\mathcal J=\{0,1,\infty\}$ we obtain a conventional multiplicative affine type system. 
In Sec.~\ref{chap:pipes} we will see a much more complex resource semiring to control timing of execution. 

\subsection{Type inference}\label{sec:gti}
We present a bound-inference algorithm for the abstract system which works by creating a system of constraints to be solved, separately, by an SMT-solver that can handle the equational theory of the resource semiring. 
In the type grammar, for the exponential type $J\cdot\theta\multimap \theta$ we allow $J$ to stand for a concrete element of $\mathcal{J}$ or for a variable in the input program; the bound-inference algorithm will  produce a set of constraints such that every model of those  constraints gives rise to a typing derivation of the program without
  resource variables as variables are instantiated to suitable concrete values. Type judgments have form
$
\constr{\Gamma\vdash M:\theta}{\chi},
$
where $\chi$ is a set of equational constraints in the semiring. We also allow an arbitrary set of constants $\mathbf k:\theta$, which will allow the definition of concrete programming languages based on the type system. We allow each constant $\textsf k$ to introduce arbitrary resource constraints $\chi_{\textsf k}$
\begin{center}\small
  \AxiomC{ }
 \UnaryInfC{$\constr{x:{\textbf 1}\cdot\theta\vdash x:\theta}{true}$}
  \DisplayProof\\[1.5ex]
  \AxiomC{ }
 \UnaryInfC{$\constr{\emptyset\vdash \textsf{k}:\theta}{\chi_{\textsf k}}$}
  \DisplayProof\\[1.5ex]
  \AxiomC{$\constr{\Gamma\vdash M:\theta}{\chi}$}
 \UnaryInfC{$\constr{\Gamma,x:J\cdot\theta'\vdash M:\theta}{\chi}$}
  \DisplayProof\\[1.5ex]
\AxiomC{$\constr{\Gamma,x:J\cdot\theta\vdash M:\theta'}{\chi}$}
 \UnaryInfC{$\constr{\Gamma\vdash \lambda x:\theta. M:J\cdot\theta\multimap\theta'}{\chi}$}
  \DisplayProof\\[1.5ex]
\AxiomC{$\constr{\Gamma,x:J_1\cdot\theta',y:J_2\cdot\theta''\vdash M:\theta}{\chi}$}
  \UnaryInfC{$\constr{\Gamma,x:J\cdot\theta'\vdash
      M[x/y]:\theta}{\chi \cup\{ J= J_1+J_2\}\cup\overline{\theta'=\theta''}}$}
  \DisplayProof
\\[1.5ex]
  \AxiomC{$\constr{\Gamma\vdash M:J\cdot\theta\multimap\theta'}{\chi}$}
  \AxiomC{$\constr{x_1:J_1\cdot\theta_1,\ldots,x_n:J_n\cdot\theta_n\vdash N:\theta''}{\chi'}$}
  \BinaryInfC{$\constr{\Gamma,x_1:J_1'\cdot\theta_1,\ldots,x_n:J_n'\cdot\theta_n\vdash MN:\theta'}{
       \chi\cup\chi'\cup\{ J_k'=J\cdot J_k\mid 1\leq k\leq n\}\cup\overline{\theta=\theta''}}$}
  \DisplayProof
\end{center}
The constraints of shape $\overline{\theta_1=\theta_2}$ are to be interpreted in the obvious way, as the set of pairwise equalities between resource bounds used in the same position in the two types:
\begin{align*}
\overline{\sigma=\sigma}&\stackrel{def}=\emptyset\\
\overline{J_1\cdot\theta_1\multimap\theta_1'=J_2\cdot\theta_2\multimap\theta_2'}&\stackrel{def}=
\{J_1=J_2\}\cup \overline{\theta_1=\theta_2}\cup \overline{\theta_1'=\theta_2'}.
\end{align*}
If $\mathcal M$ is a model, i.e. a function mapping variables to concrete values, by $\Gamma[\mathcal M]$ we write the textual substitution of each variable by its concrete value in a sequent. The following is then true by construction:
\begin{theorem}
If $\constr{\Gamma\vdash M:\theta}\chi$ and $\mathcal M$ is a model of the system of constraints $\chi$ in the semiring $\mathcal J$ then $(\Gamma\vdash M:\theta)[\mathcal M]$.
\end{theorem}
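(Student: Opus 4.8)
The plan is to argue by induction on the derivation of $\constr{\Gamma\vdash M:\theta}{\chi}$. Fix a model $\mathcal M$ of $\chi$. In each case I would apply $\mathcal M$ to the premises, invoke the inductive hypothesis to turn them into derivations in the plain system (noting that a model of $\chi$ is automatically a model of any subset of $\chi$, so the hypotheses on the premises are met), and then close the case by reapplying the corresponding plain rule to the substituted judgments, modulo the harmless erasure of the type annotation on $\lambda$-binders.

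The axioms and the two ``context-only'' rules are immediate. For \emph{Identity} and the constant rule there are no resource variables to instantiate beyond those of $\chi_{\mathsf k}$, which $\mathcal M$ satisfies by hypothesis (the constant axiom $\emptyset\vdash\mathsf k:\theta$ being assumed available in the plain system); for \emph{Weakening} and \emph{Abstraction} the constraint set is carried over unchanged, so the inductive hypothesis applies verbatim and the matching plain rule, whose side condition $x\notin fv(M)$ and whose binder are untouched by $\mathcal M$, finishes the case.

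The substance is in \emph{Contraction} and \emph{Application}, and I would isolate two small auxiliary facts first. (i) Whenever $\overline{\theta_1=\theta_2}$ is defined --- i.e.\ exactly when $\theta_1$ and $\theta_2$ share an underlying skeleton, which is guaranteed since we start from a valid constraint derivation --- any model of it makes $\theta_1[\mathcal M]$ and $\theta_2[\mathcal M]$ syntactically equal; this is a routine induction on type structure, each arrow contributing an equation $J_1=J_2$ that forces the two bounds to the same value and two recursive obligations handled by the inductive hypothesis. (ii) Reading $\mathcal M$ as the homomorphic extension of a function into $(\mathcal J,+,\times,\ringzero,\ringunit)$, a model of $J=J_1+J_2$ satisfies $\mathcal M(J)=\mathcal M(J_1)+\mathcal M(J_2)$ in $\mathcal J$, and a model of $J_k'=J\cdot J_k$ satisfies $\mathcal M(J_k')=\mathcal M(J)\times\mathcal M(J_k)$. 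Granting these, \emph{Contraction} follows: the inductive hypothesis gives the substituted premise $\Gamma[\mathcal M],x{:}\mathcal M(J_1){\cdot}\theta'[\mathcal M],y{:}\mathcal M(J_2){\cdot}\theta''[\mathcal M]\vdash M:\theta[\mathcal M]$; by (i), $\theta'[\mathcal M]=\theta''[\mathcal M]$, so plain \emph{Contraction} applies and yields $\Gamma[\mathcal M],x{:}(\mathcal M(J_1)+\mathcal M(J_2)){\cdot}\theta'[\mathcal M]\vdash M[x/y]:\theta[\mathcal M]$, which by (ii) is exactly $(\Gamma,x{:}J{\cdot}\theta'\vdash M[x/y]:\theta)[\mathcal M]$. \emph{Application} is the same pattern, using (i) on $\overline{\theta=\theta''}$ to align the function's domain with the argument's type, (ii) on each $J_k'=J\cdot J_k$ to recover the bounds produced by the $J\cdot\Gamma'$ abbreviation, and observing that the domain-disjointness side condition concerns identifier names only and is therefore preserved.

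The one step I would flag as needing explicit care --- the place where the result could fail to be ``true by construction'' under slightly different conventions --- is the match between (ii) and the fact that the plain system resolves semiring operations \emph{syntactically}: one must check that no unevaluated expression such as $(\mathcal M(J_1)+\mathcal M(J_2))\cdot\theta$ survives where the plain rule expects a literal element of $\mathcal J$. With the stated convention that, for instance, $2\cdot A$ and $(1+1)\cdot A$ are syntactically identical, this is immediate, but it is the only genuinely load-bearing observation; everything else is bookkeeping on derivation trees.
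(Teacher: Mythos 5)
Your proof is correct and is essentially the paper's argument made explicit: the paper dismisses this theorem as ``true by construction,'' and your induction on the constraint derivation, with the two auxiliary facts about $\overline{\theta_1=\theta_2}$ and the semiring equations, is precisely the content of that claim. You also correctly identify the one point where the argument genuinely leans on a convention --- that semiring operations are resolved syntactically in types, so $(\mathcal M(J_1)+\mathcal M(J_2))\cdot\theta$ and $\mathcal M(J)\cdot\theta$ coincide --- which the paper flags at the start of Sec.~\ref{sub:agf} for exactly this reason.
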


\subsection{Categorical semantics}\label{sec:cf}
We first give an abstract framework suitable for interpreting the
abstract type system of Sec.~\ref{sub:agf}. We  require two
categories. We interpret \emph{computations} in a symmetric monoidal
closed category $(\mathcal G,\otimes,I)$ in which the tensor unit $I$
is a terminal object. Let $\alpha$ be the \emph{associator} and
$\lambda,\rho$ be the right and left \emph{unitors}. We write the
unique morphism into the terminal object as ${!}_A:A\rightarrow
I$. Currying is the isomorphism
\[
\Lambda_{A,B,C}:A\otimes B\rightarrow C\simeq A\rightarrow B\multimap C,
\] 
and the evaluation morphism is $\mathit{eval}_{A,B}:A\otimes (A\multimap B)\rightarrow B$. 

We interpret  \emph{resources} in a  category $\mathcal R$ with two monoidal tensors $(\catplus,0)$ and $(\cattimes,1)$ such that:

\begin{align*}
  &J\cattimes(K\catplus L) \simeq J\cattimes K \catplus J\cattimes L&\text{(r-distributivity)}\\
  &(J\catplus K)\cattimes L \simeq J\cattimes L \catplus K\cattimes L&\text{(l-distributivity)}\\
  &J\cattimes 0 \simeq 0\cattimes J \simeq 0&\text{(zero)}.
\end{align*}
The action of resources on computations is modeled by a functor
$\cdot:\mathcal R\times\mathcal G\rightarrow \mathcal G$ such that the following natural isomorphisms must exist:
\begin{align}
  \delta_{J,K,A}: J\cdot A \otimes K\cdot A&\simeq(J\catplus K)\cdot A\label{eq:dis2}\\
  \pi_{R,R',A}: R\cdot(R'\cdot A)&\simeq (R\odot R')\cdot A \label{eq:sro}\\
  \zeta_A:0\cdot A &\simeq I\label{eq:zeroi}\\
  \iota_A:\mathbf 1 \cdot A &\simeq A\label{eq:one}
\end{align}
and the following diagrams commute:
\begin{equation}\label{eq:coh}
\xymatrix@C=12ex{
J{\cdot} A \otimes K{\cdot} A \otimes L{\cdot} A \ar[d]^{1_{J{\cdot}A}\otimes \delta_{K,L,A}} \ar[r]^-{\delta_{J,K,A}\otimes 1_{L{\cdot} A}} & (J\catplus K){\cdot} A\otimes L{\cdot} A \ar[d]^{\delta_{J\catplus K,L,A}}\\
J{\cdot} A \otimes (K\catplus L){\cdot} A \ar[r]^-{\delta_{J,K\catplus L, A}} & (J \catplus K\catplus L){\cdot} A 
}
\end{equation}
\begin{equation}\label{eq:nat}
\xymatrix{
J{\cdot}A\otimes K{\cdot} A \ar[d]^{J{\cdot}f\otimes K{\cdot}f} \ar[r]^{\delta_{J,K,A}}& (J\catplus K){\cdot} A\ar[d]^{(J\catplus K)\cdot f} \\
J{\cdot}B\otimes K{\cdot} B\ar[r]^{\delta_{J,K,B}} & (J\catplus K){\cdot} B 
}
\end{equation}
Natural isomorphism $\pi$ (Eqn.~\ref{eq:sro}) reduces successive resource actions on computations to a composite resource action, corresponding to the product of the semiring. 
Natural isomorphism $\delta_{J,K,A}$ in Eqn.~\ref{eq:dis2} is a ``quantitative'' version of the diagonal morphism in a Cartesian category, which collects the resources of the contracted objects. The commuting diagram in Eqn.~\ref{eq:coh} stipulates that the order in which we use the ``quantitative'' diagonal order to contract several objects is irrelevant, and the commuting diagram in Eqn.~\ref{eq:nat} gives a ``quantitative'' counterpart for the naturality of the diagonal morphism. Finally,  Eqns.~\ref{eq:zeroi} and~\ref{eq:one} shows the connection between the units of the tensors involved.

A direct consequence of the naturality of $\rho$ and $I$ being terminal, useful for proving coherence, is:
\begin{proposition}\label{prop:wk}
The following diagram commutes in the category $\mathcal G$ for any $f:B\rightarrow C$:
\[
\xymatrix{
B\ar[r]^-{1_B\otimes !_A}\otimes A \ar[d]^{f\otimes 1_A} & B\otimes I \ar[r]^{\rho_B}& B\ar[d]^f\\
C\otimes A \ar[r]^-{1_C\otimes !_A} & C\otimes I \ar[r]^{\rho_C} & C.
}
\]
\end{proposition}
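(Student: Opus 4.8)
\emph{Proof plan.} The strategy is to factor the outer rectangle into two commuting squares by inserting the evident middle morphism $f\otimes 1_I:B\otimes I\to C\otimes I$. This presents the rectangle as the horizontal pasting of a left square, with top edge $1_B\otimes{!}_A$, bottom edge $1_C\otimes{!}_A$, and vertical edges $f\otimes 1_A$ and $f\otimes 1_I$; and a right square, with top edge $\rho_B$, bottom edge $\rho_C$, and vertical edges $f\otimes 1_I$ and $f$.

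First I would dispatch the right square: it is exactly the naturality square of the right unitor $\rho$, regarded as a natural isomorphism $(-)\otimes I\Rightarrow \mathrm{Id}_{\mathcal G}$, instantiated at the morphism $f:B\to C$, so that $\rho_C\circ(f\otimes 1_I)=f\circ\rho_B$. Then I would check the left square, which commutes by bifunctoriality (the interchange law) of the tensor: both ways around compute to $f\otimes{!}_A$, since $(f\otimes 1_I)\circ(1_B\otimes{!}_A)=(f\circ 1_B)\otimes(1_I\circ{!}_A)=f\otimes{!}_A=(1_C\circ f)\otimes({!}_A\circ 1_A)=(1_C\otimes{!}_A)\circ(f\otimes 1_A)$. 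Here the hypothesis that $I$ is terminal enters only to guarantee that the arrow ${!}_A:A\to I$ appearing throughout the diagram is well-defined (indeed unique). Pasting the two squares along the shared edge $f\otimes 1_I$ then gives $\rho_C\circ(1_C\otimes{!}_A)\circ(f\otimes 1_A)=\rho_C\circ(f\otimes 1_I)\circ(1_B\otimes{!}_A)=f\circ\rho_B\circ(1_B\otimes{!}_A)$, which is precisely the commutativity asserted.

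The argument is entirely routine, so there is no substantive obstacle; the only thing to be careful about is the bookkeeping of which tensor factor carries an identity versus $f$ or ${!}_A$, and applying the unitor naturality square with the correct variance.
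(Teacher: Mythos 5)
Your proposal is correct and takes essentially the same route as the paper, which gives no explicit proof but states the result is ``a direct consequence of the naturality of $\rho$ and $I$ being terminal''; your decomposition into the $\rho$-naturality square and the interchange-law square (with terminality serving only to provide $!_A$) is just that argument spelled out.
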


Computations are interpreted in a canonical way in the category $\mathcal G$. Types are interpreted as objects and terms as morphisms, with 
\begin{align*}
\sbr{J\cdot\theta\multimap\theta'}_{\mathcal G}= (\sbr J_{\mathcal R}\cdot\sbr\theta_{\mathcal G})\multimap\sbr{\theta'}_{\mathcal G}.
\end{align*}
From now on,  the interpretation of the resource action is written as $J$ instead of  $\sbr J_{\mathcal R}$ when there is no ambiguity and the subscript of $\sbr-_{\mathcal G}$ is left implicit. 

Environments are interpreted as
\[
\sbr\Gamma=\sbr{x_1:J_1\cdot\theta_1,\ldots x_n:J_n\cdot \theta_n}
  = J_1\cdot\sbr{\theta_1}\otimes\cdots\otimes J_n\cdot\sbr{\theta_n}.
\]
Terms are morphisms in $\mathcal G$, $\sbr{\Gamma\vdash M:\theta}$ defined as follows:
\begin{align*}
  & \sbr{x:\mathbf 1\cdot\theta\vdash x:\theta} = \iota_{\sbr\theta}  \\
  & \sbr{\Gamma,x:J\cdot\theta \vdash M:\theta'} = 1_{\sbr\Gamma}\otimes !_{J\cdot\sbr\theta};\rho_{\sbr\Gamma};\sbr{\Gamma\vdash M:\theta}   \\
  & \sbr{\Gamma\vdash\lambda x.M:J\cdot\theta\multimap\theta'}
  = \Lambda_{J\cdot\sbr\theta}\bigl( \sbr{\Gamma, x:J\cdot\theta\vdash M:\theta'} \bigr)\\
  & \sbr{\Gamma,J\cdot\Gamma'\vdash FM:\theta'}=(\sbr{\Gamma\vdash F:J\cdot\theta\multimap\theta'}\otimes
  J\cdot\sbr{\Gamma'\vdash M:\theta});\mathit{eval}_{J\cdot\sbr\theta,\sbr{\theta'}}\\
  & \sbr{\Gamma,x:(J+ K)\cdot\theta\vdash M[x/y]:\theta'}=  1_{\sbr\Gamma}\otimes\delta_{J,K, \theta};\sbr{\Gamma,x:J\cdot\theta,y: K\cdot\theta\vdash M:\theta}.
\end{align*}
\subsubsection{Coherence}
The main result of this section is the coherence of  typing. The derivation trees are not unique because there is choice in the use of the weakening and contraction rules. Since meaning is calculated on a particular derivation tree we need to show that it is independent of it. The coherence conditions for the monoidal category are the standard ones~\cite{Kelly64}, but what is interesting and important for coherence is that resource manipulation does not break coherence. The key role is played by the isomorphism $\delta$ which is the resource-sensitive version of contraction, which can combine or de-compose resources without loss of information.  

The key idea of the proof is that we can bring any derivation tree to a standard form (which we call \emph{stratified}), with weakening and contraction performed as late as possible.  
Weakening and contraction for a variable can be pushed as far down as the first lambda abstraction that uses the variable, or to the root of the derivation tree if it remains unbound. 

We will use the following obviously admissible derivation rules (a chain of contractions followed by an abstraction, and weakening followed by an abstraction, respectively):
\begin{center}
\AxiomC{$x_1:J_1\cdot\theta,\ldots,x_n:J_n\cdot\theta,\Gamma\vdash M:\theta'$}
\RightLabel{Abs-con}
\UnaryInfC{$\Gamma\vdash\lambda x.M[x/x_i]:(J_1+\cdots+ J_n)\cdot\theta\multimap\theta'$}
\DisplayProof\\[1.5ex]
\AxiomC{$\Gamma\vdash M:\theta'$}
\RightLabel{Abs-weak}
\UnaryInfC{$\Gamma\vdash\lambda x.M:J\cdot\theta\multimap\theta'$}
\DisplayProof
\end{center}
where, in both rules, $x\not\in fv(M)$.
We also introduce obviously admissible rules for contracting multiple (0, one or more) variables (labeled \emph{Contraction+}) and for weakening multiple (0, one or more) variables (labeled \emph{Weakening+}).

We denote sequents $\Gamma\vdash M:\theta$ by $\Sigma$ and derivation
trees by $\nabla$. Let 
\[\Lambda(\Sigma)\in\{id, wk, ab, ap, co, abco, abwk, co{+}, wk{+}\}\]
be a label on the sequents, indicating whether a sequent is the
product of the rule for identity, weakening, etc. If a sequent
$\Sigma=\Gamma\vdash M:\theta$ is the root of a derivation tree
$\nabla$ we write it $\Sigma^\nabla$ or $\Gamma\vdash^\nabla
M:\theta$. 

We say that a sequent is \emph{linear} if each variable in the environment $\Gamma$ occurs freely in the term $M$ exactly once. 
\begin{definition}
We say that a  derivation tree $\nabla$ is \emph{stratified} if and only if:
\begin{itemize}
\item the root is labeled $wk{+}$;
\item the node above the root is labeled $co{+}$;
\item no other node is labeled by $wk, co, wk{+}$ or $co{+}$;
\item all sequents in $\nabla$, except possibly for the root and the sequent above the root, are linear. 
\end{itemize}
\end{definition}
\begin{lemma}\label{lem:struni}
If a linear sequent has a stratified derivation tree then it is unique (up to renaming of variables). 
\end{lemma}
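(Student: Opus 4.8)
The plan is to prove this by induction on the structure of the term $M$ appearing in the linear sequent $\Sigma = \Gamma \vdash M : \theta$, exploiting the rigidity imposed by the stratification conditions. The key observation is that once we strip away the $wk{+}$ root and the $co{+}$ node directly above it, the remaining tree $\nabla'$ derives a \emph{linear} sequent and contains no occurrence of $wk, co, wk{+}$ or $co{+}$ at all: every node is $id$, $ap$, $ab$, $abco$ or $abwk$. Since the labels $abco$ and $abwk$ only arise at $\lambda$-abstractions (where they absorb the contractions/weakenings that would otherwise be pushed down to that abstraction, by the ``pushing as late as possible'' discussion preceding the lemma), the shape of $\nabla'$ is essentially forced by the syntax of $M$, and the only freedom left is in how the $wk{+}$ and $co{+}$ steps at the bottom distribute variables — which is pinned down by $\Gamma$ and by linearity of $\Sigma$.

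First I would handle the base case. If $M$ is a variable $x$, then because $\Sigma$ is linear, $\Gamma$ must be exactly $x : \mathbf 1 \cdot \theta$ (no extra variables, each occurring once), so $\nabla'$ can only be a single $id$ node; the $wk{+}$ and $co{+}$ steps at the bottom are forced to be the trivial (zero-variable) instances, giving a unique stratified tree. If $M = \mathsf k$ a constant, the argument is identical with the constant axiom in place of $id$. For the inductive step I would split on whether $M$ is an application or an abstraction. If $M = FN$, then $\nabla'$ must end in an $ap$ node, and the side condition on \emph{Application} ($\mathrm{dom}(\Gamma) \cap \mathrm{dom}(\Gamma') = \emptyset$) together with the requirement that the sequent be linear forces the unique splitting of the context into the $F$-part and $N$-part: a variable of $M = FN$ lies in exactly one of $fv(F), fv(N)$ by linearity, so the partition is determined, and the scalar $J$ and the premise types are read off from the type $J \cdot \theta \multimap \theta'$ of $F$. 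Then the two premises are themselves linear sequents with stratified subderivations (they inherit stratification since $\nabla'$ has no $wk/co$ nodes and the sub-trees must, after possibly re-inserting a trivial $wk{+}/co{+}$ pair at their own roots, satisfy the definition), so the induction hypothesis applies to each. If $M = \lambda x. M_0$, then the bottom of $\nabla'$ is an $ab$, $abco$, or $abwk$ node; which one is forced by counting occurrences of $x$ in $M_0$: zero occurrences forces $abwk$, one forces $ab$, more than one forces $abco$ — and in the $abco$ case the scalar on $x$ in the conclusion, $(J_1 + \cdots + J_n)$, must be split in the \emph{unique} way dictated by $\Gamma$ together with the requirement that the premise be linear (each $x_i$ occurring exactly once). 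In every case the premise is again a linear sequent whose subderivation is stratified, so the induction hypothesis finishes the step.

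I expect the main obstacle to be the bookkeeping around the $abco$ and $abwk$ cases: one must argue carefully that the stratification discipline really does force \emph{all} contraction and weakening on a variable to be absorbed into the unique abstraction binding it (or into the root $wk{+}/co{+}$ if it stays free), so that there is no alternative derivation that, say, contracts two copies of $x$ earlier and then carries a single $x$ up to the abstraction. This is exactly where the ``no node labeled $wk, co, wk{+}, co{+}$'' clause and the linearity-of-intermediate-sequents clause do their work, and the proof hinges on showing these two clauses are jointly strong enough to eliminate such reorderings. A secondary subtlety is the treatment of the type-equality side conditions implicit in \emph{Contraction} (the two occurrences must have literally the same type $\theta$), but since types are identified syntactically this is immediate once the variable partition is fixed. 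Modulo this, the induction is routine, and uniqueness ``up to renaming of variables'' accounts only for the choice of bound-variable names introduced by the $co{+}$ step.
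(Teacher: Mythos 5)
Your proposal is correct and follows essentially the same route as the paper's proof: a structural induction on the term $M$ showing that, given the stratification constraints, each syntactic form (variable, application, abstraction with or without occurrences of the bound variable) forces a unique choice of rule, with linearity pinning down the context split in the application case and the root $wk{+}/co{+}$ nodes being trivial for a linear sequent. The only cosmetic difference is that you distinguish a plain $ab$ node for a singly-occurring bound variable where the paper uses \emph{Abs-con} uniformly (which degenerates to plain abstraction when $n=1$).
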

\begin{proof}
The last two rules (\emph{wk+} and \emph{co+}) bring the sequent to a linear form. 
In constructing the stratified derivation tree $\nabla$ of a linear sequent $\Gamma\vdash M:\theta$ the choice of what rules to apply is always uniquely determined by the structure of the term $M$. 
\begin{description}
\item{$MN$:} The only possible rule is \emph{Application} and, since the term $MN$ is linear both $M$ and $N$ are linear and there is only one way $\Gamma$ can be split. 
\item{$\lambda x.M$:} We consider two cases:
\begin{itemize}
\item If $x\not\in fv(M)$ we infer the rule \emph{Abs-weak}. 
\item If $x\in fv(M)$ we use \emph{Abs-con} to give each occurrence of $x$ in $M$ a new (fresh) name.  
\end{itemize}
There are no other rules that would keep the derivation tree stratified. 
\item{$x$:} The only possible rule is $wk{+}$. 
\end{description}
All the choices above are unique (up to the choice of variable names in \emph{Abs-con}).
\end{proof}
We now show that any  derivation can be reduced to a stratified derivation through applying a series of meaning-preserving tree transformations, which we call \emph{stratifying rules}.

The Weakening rule commutes trivially with all other rules except Identity and Abstraction, if abstraction is on the weakened variable. In this case we replace the sequence of Weakening followed by Abstraction with the combined \emph{Abs-weak} rule.
The more interesting tree transformation rules are for Contraction. 

Contraction commutes with Application. There are two pairs of such rules, one for pushing down contraction in the function and one for pushing down contraction in the argument:
\begin{center}\small
\AxiomC{$ \Gamma,x:J\cdot\theta,y:J'\cdot\theta\vdash F:J_1\cdot\theta_1\multimap \theta_2$}
\UnaryInfC{$\Gamma,x:(J+ J')\cdot\theta\vdash F[x/y]:J_1\cdot\theta_1\multimap \theta_2 $}
\AxiomC{$ \Gamma'\vdash M:\theta_1$}
\BinaryInfC{$ \Gamma,x:(J+ J')\cdot\theta,J_1\cdot\Gamma'\vdash F[x/y]M:\theta_2$}
\DisplayProof
\\[1.5ex]
$\stackrel{AL}\Longleftrightarrow$
\\[1.5ex]
\AxiomC{$\Gamma,x:J\cdot\theta,y:J'\cdot\theta\vdash F :J_1\cdot\theta_1\multimap \theta_2$}
\AxiomC{$\Gamma'\vdash M :\theta_1$}
\BinaryInfC{$\Gamma,x:J\cdot\theta,y:J'\cdot\theta,J_1\cdot\Gamma'\vdash FM :\theta_2$}
\UnaryInfC{$ \Gamma,x:(J+ J')\cdot\theta,J_1\cdot\Gamma'\vdash (FM)[x/y]:\theta_2$}
\DisplayProof
\end{center}
Similarly for pushing down contraction from the argument side and similarly for rules involving weakening:
\begin{center}\small
\AxiomC{$\Gamma\vdash F:J_1\cdot\theta_1\multimap \theta_2 $}
\AxiomC{$ \Gamma,x:J\cdot\theta,y:J'\cdot\theta\vdash M:\theta_1$}
\UnaryInfC{$\Gamma,x:(J+ J')\cdot\theta\vdash M[x/y]:\theta_1$}
\BinaryInfC{$ \Gamma,x:(J_1\times(J+ J'))\cdot\theta,\Gamma'\vdash F(M[x/y]):\theta_2$}
\DisplayProof
\\[1.5ex]
$\stackrel{AR}\Longleftrightarrow$
\\[1.5ex]
\AxiomC{$\Gamma\vdash F :J_1\cdot\theta_1\multimap \theta_2$}
\AxiomC{$\Gamma',x:J\cdot\theta,y:J'\cdot\theta\vdash M :\theta_1$}
\BinaryInfC{$\Gamma,J_1\cdot\Gamma',x:(J_1\times J)\cdot\theta,y:(J_1\times J')\cdot\theta\vdash FM :\theta_2$}
\UnaryInfC{$ \Gamma,x:(J_1\times J+ J_1\times J')\cdot\theta,\Gamma'\vdash (FM)[x/y]:\theta_2$}
\DisplayProof
\end{center}
Contraction also commutes with Abstraction, if the contracted and abstracted variables are distinct, $x\neq y$: 
\begin{center}
\AxiomC{$\Gamma,x:J\cdot\theta,x':J'\cdot \theta,y:K\cdot\theta'\vdash M:\theta''$}
\UnaryInfC{$\Gamma,x:(J+ J')\cdot\theta,y:K\cdot\theta'\vdash M[x/x']:\theta''$}
\UnaryInfC{$\Gamma,x:(J+ J')\cdot\theta\vdash \lambda y.M[x/x']:K\cdot\theta'\multimap\theta''$}
\DisplayProof
\\[1.5ex]
$\stackrel{CA}\Longleftrightarrow$
\\[1.5ex]
\AxiomC{$\Gamma,x:J\cdot\theta,x':J'\cdot \theta,y:K\cdot\theta'\vdash M:\theta''$}
\UnaryInfC{$\Gamma,x:J,x': J'\cdot\theta\vdash \lambda y.M:K\cdot\theta'\multimap\theta''$}
\UnaryInfC{$\Gamma,x:(J+ J')\cdot\theta\vdash (\lambda y.M)[x/x']:K\cdot\theta'\multimap\theta''$}
\DisplayProof
\end{center}
The rule for swapping contraction and weakening is (types are obvious and we elide them for concision):
\begin{center}
\mbox{
  \AxiomC{$\Gamma,y,z\vdash M$}
  \UnaryInfC{$\Gamma,y\vdash M[y/z]$}
  \UnaryInfC{$\Gamma,y,x\vdash M[y/z]$}
  \DisplayProof}$\stackrel{WC}\Longleftrightarrow$\mbox{
\AxiomC{$\Gamma,y,z\vdash M$}
  \UnaryInfC{$\Gamma,y,z,x\vdash M $}
  \UnaryInfC{$\Gamma,y,x\vdash M[y/z]$}
  \DisplayProof}
\end{center}
\begin{proposition}\label{prop:syneq}The following judgments are syntactically equal
\begin{align*}
&\Gamma,x:\theta,\Gamma'\vdash F[x/y]M:\theta'= \Gamma,x:\theta,\Gamma'\vdash (FM)[x/y]:\theta',
\\
&\Gamma,x:(J_1\times(J+ J'))\cdot\theta,\Gamma'\vdash F(M[x/y]):\theta_2=\Gamma,x:(J_1\times J+ J_1\times J')\cdot\theta,\Gamma'\vdash (FM)[x/y]:\theta_2,
\\
&\Gamma,x:(J+ J')\cdot\theta\vdash \lambda y.M[x/x']:K\cdot\theta'\multimap\theta'=\Gamma,x:(J+ J')\cdot\theta\vdash (\lambda y.M)[x/x']:K\cdot\theta'\multimap\theta''.
\end{align*}
\end{proposition}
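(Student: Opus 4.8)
The plan is to verify each of the three claimed identities by checking, separately, that the two sequents carry the same conclusion type, the same environment, and the same term (reading the stray $\theta'$ at the end of the third line as $\theta''$); the conclusion types agree by inspection in all three cases.

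For the terms I would lean on two standard facts about capture-avoiding substitution of a variable for a variable: first, $(FN)[x/y]=(F[x/y])(N[x/y])$ holds unconditionally, since substituting a variable for a variable can never cause capture; second, $(\lambda z.M)[x/y]=\lambda z.(M[x/y])$ whenever $z\notin\{x,y\}$. In each identity the remaining discrepancy is then eliminated by a freshness condition already baked into the side conditions of the rules producing the two derivations. In the first identity, $M$ is the argument of an application whose function side carries $y$ in its environment, so the disjointness requirement $\mathrm{dom}(\Gamma)\cap\mathrm{dom}(\Gamma')=\emptyset$ of the \emph{Application} rule forces $y\notin fv(M)$; hence $M[x/y]=M$ and $(FM)[x/y]=(F[x/y])\,M$. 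The second identity is the mirror image: now $x$ and $y$ lie in the argument's environment, so the same disjointness condition gives $x,y\notin fv(F)$, whence $F[x/y]=F$ and $(FM)[x/y]=F\,(M[x/y])$. In the third identity the relevant fact is that the abstracted variable $y$ is distinct from both contracted variables $x$ and $x'$, which holds by well-formedness of the premise's environment ($x'\neq y$) together with the side condition $x\neq y$ of the contraction--abstraction commutation, so $(\lambda y.M)[x/x']=\lambda y.(M[x/x'])$.

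It then remains to compare the environments. In the first and third identities they are literally the same sequence (modulo the harmless elision of resource annotations in the statement), so there is nothing to do. In the second identity the only difference is the annotation on $x$, which is $J_1\times(J+J')$ on the left and $J_1\times J+J_1\times J'$ on the right; these are identified because, by the convention of Section~\ref{sub:agf}, semiring operations are resolved syntactically inside types, and $J_1\times(J+J')=J_1\times J+J_1\times J'$ is an instance of distributivity in $\mathcal J$.

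There is no genuine obstacle here: the proposition is pure bookkeeping. The one point worth stating carefully is that the side conditions of \emph{Application} and of the contraction--abstraction commutation are precisely the freshness facts needed to make the substitutions line up --- which is also why these three identities are exactly what is required to see that the stratifying rewrites $AL$, $AR$ and $CA$ really transform one derivation \emph{of a fixed sequent} into another, rather than a derivation of one sequent into a derivation of a superficially different one.
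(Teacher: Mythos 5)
Your proof is correct and follows essentially the same route as the paper's: the disjointness side condition on \emph{Application} (what the paper calls its linearity) forces the substituted variable to lie on exactly one side, so the two terms coincide syntactically, and the environments agree because semiring distributivity is resolved syntactically. Your version is slightly more explicit --- pinning down which side $y$ inhabits in each of the $AL$/$AR$ cases, noting that $x'\neq y$ as well as $x\neq y$ is needed for the abstraction case, and flagging the $\theta'$/$\theta''$ typo --- but the substance is identical.
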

\begin{proof}
The proof of the first two statements is similar.
Because Application is linear it means that an identifier $y$ occurs either in $F$ or in $M$, but not in both. Therefore $(FM)[x/y]$ is either  $F(M[x/y])$ or $(F[x/y])M$. This makes the terms syntactically equal. In any semiring, $J_1\times(J+ J')=J_1\times J+ J_1\times J'$, which makes the environments equal.  Note that semiring equations are resolved syntactically in the type system, as pointed out at the beginning of this section.  For the third statement we know that $x\neq y$.
\end{proof}
\begin{proposition}\label{prop:str}
If $\nabla$ is a  derivation and $\nabla'$ is a tree obtained by applying a stratifying rule then $\nabla'$ is a valid  derivation with the same root $\Sigma^\nabla=\Sigma^{\nabla'}$ and the same leaves. 
\end{proposition}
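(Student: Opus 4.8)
The plan is to argue by cases on which stratifying rule is applied, and for each case to discharge three separate obligations: that $\nabla'$ is again a valid derivation (every rule instance in it meets its side conditions), that its root is \emph{syntactically} equal to $\Sigma^\nabla$, and that its multiset of leaves coincides with that of $\nabla$. The third obligation is essentially free and uniform across all cases: each stratifying rule rewrites only a bounded local window of $\nabla$ --- two adjacent rule applications (a structural rule sitting next to an \emph{Application}, an \emph{Abstraction}, or another structural rule), or in the case of \emph{Abs-con} a chain of \emph{Contraction}s topped by an \emph{Abstraction} --- leaving the immediate sub-derivations feeding the window, and the whole of the tree below it, untouched. The transformation only re-attaches the very same sub-derivations in a different order, so no leaf is created or destroyed, and the roots of those sub-derivations (hence the premises above the window and the conclusion below it) are unchanged; consequently validity of $\nabla'$ everywhere outside the window is inherited from $\nabla$, and only the rewritten window needs checking.

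The easy cases are the commutations of \emph{Weakening} with the other rules and the two ``merge'' transformations. When \emph{Weakening} commutes with a non-binding rule it is simply slid past it and all side conditions are visibly preserved; the only subtlety is \emph{Weakening} immediately followed by \emph{Abstraction} on the weakened variable, which is replaced by one use of \emph{Abs-weak}, and a chain of \emph{Contraction}s followed by \emph{Abstraction} on the contracted variable, replaced by \emph{Abs-con}. Both replacement rules were already noted to be admissible, and produce exactly the stated root, so nothing more is needed here.

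The substantive cases are $AL$, $AR$ (contraction pushed through the two sides of an application), $CA$ (contraction pushed through an abstraction with $x\neq y$), and $WC$ (weakening and contraction swapped). In each, the two roots produced by the window have \emph{environments} that differ at most by an instance of the distributive law --- $J_1\times(J+J') = J_1\times J + J_1\times J'$ for $AR$, and nothing at all for $AL$, $CA$, $WC$ --- and \emph{terms} that differ only in the order in which a renaming substitution is pushed past a term former. Both identifications are precisely the content of Proposition~\ref{prop:syneq}: linearity of \emph{Application} forces the renamed variable to occur in only one of the two sub-terms, so $(FM)[x/y]$ is literally $F[x/y]\,M$ or $F\,(M[x/y])$; and for $CA$, since $x\neq y$ and $x'\neq y$, $(\lambda y.M)[x/x']$ is literally $\lambda y.M[x/x']$. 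Recalling that semiring equations are resolved syntactically in the type system, ``equal up to the distributive law'' here means genuinely syntactically equal, so appealing to Proposition~\ref{prop:syneq} closes the equality-of-roots obligation in every substantive case.

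The part that needs genuine care --- and the main obstacle --- is verifying that the two rearranged rule instances inside the window are themselves \emph{legal}, in particular that the side condition of \emph{Application} ($\mathrm{dom}(\Gamma)\cap\mathrm{dom}(\Gamma')=\emptyset$) and that of \emph{Weakening} ($x\notin fv(M)$) survive the rearrangement. For $AR$ one must track how the resource annotations of the argument context get multiplied by $J_1$ when \emph{Application} is performed first, and confirm that the subsequent re-contraction lands on exactly the annotation $(J_1\times J + J_1\times J')$ that appears on the original root; this is the only point where the bookkeeping is delicate. For $WC$ one checks that moving the weakening of $x$ above the contraction of $z$ into $y$ keeps $x\notin fv(M[y/z])$ and keeps all contexts pairwise-disjoint where required --- both immediate once one observes that a typed term has all its free variables in the domain of its environment, so that $x\notin fv(M[y/z])$ iff $x\notin fv(M)$ for the fresh variable $x$. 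With these side-condition checks performed, every case closes, and since $\nabla'$ arises from $\nabla$ by a single such local rewrite, the proposition follows.
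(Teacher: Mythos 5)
Your proof is correct and is essentially the paper's argument in expanded form: the paper disposes of this proposition with exactly the two ingredients you use, namely a case inspection of each stratifying rule and an appeal to Proposition~\ref{prop:syneq} for the syntactic equality of the roots in the substantive cases. Your additional bookkeeping on side conditions and leaf preservation is a faithful elaboration rather than a different route.
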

\begin{proof}
By inspecting the rules and using Prop.~\ref{prop:syneq}. 
\end{proof}
Stratifying transformations preserve meaning. The following more general proposition shows that in general the weakening rule can be pushed by any other rule without changing meaning.
\begin{lemma} \label{lem:streq}
If $\nabla\Rightarrow\nabla'$ is a stratifying rule then $\sbr{\Sigma^\nabla}=\sbr{\Sigma^{\nabla'}}$.
\end{lemma}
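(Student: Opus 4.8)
The plan is to prove the lemma by a case analysis over the finite list of stratifying rules introduced above, in each case unfolding the inductive definition of $\sbr{-}$ at the root sequent and reducing the desired equation to an instance of one of the coherence data of the model. Throughout I would tacitly use Proposition~\ref{prop:syneq}, which guarantees that the two roots $\Sigma^\nabla$ and $\Sigma^{\nabla'}$ are literally the same sequent, so that $\sbr{\Sigma^\nabla}$ and $\sbr{\Sigma^{\nabla'}}$ are parallel morphisms with the same source and target and it makes sense to compare them; and I would use that the subtrees feeding into the rewritten fragment are unchanged, so it suffices to verify the equation for the local rearrangement and then propagate it by bifunctoriality of $\otimes$ and functoriality of the action $\cdot$.

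First I would dispatch the easy cases: Weakening commuting with Application, with Contraction from below, or with another Weakening, and the replacement of a Weakening immediately followed by an Abstraction on the weakened variable by the admissible rule \emph{Abs-weak}. A Weakening step is interpreted by $1_{\sbr\Gamma}\otimes{!}_{J\cdot\sbr\theta};\rho_{\sbr\Gamma}$, i.e.\ it discards a tensor component into the terminal unit $I$ and reassociates. Proposition~\ref{prop:wk} says precisely that this ``projection'' commutes with an arbitrary $f:B\to C$, and since every following rule is semantically, up to the monoidal coherence isomorphisms, a morphism acting on the surviving components, pushing a Weakening past it changes nothing; \emph{Abs-weak} is by construction the composite it abbreviates. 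The only work here is bookkeeping with $\alpha$, the unitors, and the symmetry, justified by the monoidal coherence theorem of~\cite{Kelly64}.

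The substantive cases are the Contraction rules: $AL$ and $AR$ (Contraction over Application, on the function side and on the argument side) and $CA$ (Contraction over an Abstraction of a different variable). In each, the Contraction step contributes a $1_{\sbr\Gamma}\otimes\delta_{J,J',\sbr\theta}$ prefix on the context. For $AL$, where the contracted variable lies in the function component, sliding $\delta$ downward past the Application is exactly the naturality square of Eqn.~\ref{eq:nat} instantiated at the morphism interpreting $F$, with the untouched argument component $J_1\cdot\sbr{M}$ riding along by bifunctoriality of $\otimes$. For $CA$, commuting $\delta$ past the Abstraction is naturality of the currying isomorphism $\Lambda$ in its first (context) argument. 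Case $AR$ is the delicate one and I expect it to be the main obstacle: there the contracted variables occur in the argument, which the Application rule places under the resource action $J_1\cdot(-)$, so one must commute $\delta_{J,J',\sbr\theta}$ past $J_1\cdot(-)$ and reconcile the type-level identity $J_1\times(J+J')=J_1\times J+J_1\times J'$ (resolved syntactically in the type system) with the r-distributivity isomorphism of $\mathcal R$ and the isomorphism $\pi$ of Eqn.~\ref{eq:sro}. Concretely this requires the compatibility of $\delta$ with $\pi$ and with the strong-monoidal structure of the functor $J_1\cdot(-)$ — a pentagon-style diagram chase combining Eqns.~\ref{eq:dis2} and~\ref{eq:sro} with the distributivity isomorphisms of $\mathcal R$ — together again with naturality~\ref{eq:nat}; and where the rewrite forces nested contractions to be regrouped, the commuting diagram~\ref{eq:coh} is what makes the chase close independently of the grouping.

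Finally, the rule $WC$ swapping a Contraction with a Weakening combines the two preceding kinds of argument: the Weakening is a projection which by Proposition~\ref{prop:wk} commutes with the morphism $1_{\sbr\Gamma}\otimes\delta_{J,J',\sbr\theta}$ interpreting the Contraction, and naturality of $\rho$ and of $\delta$ completes the rearrangement. Summing over all cases, every single-step stratifying rewrite preserves the denotation of the root sequent, which is the statement of the lemma.
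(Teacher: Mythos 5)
Your proposal is correct and follows the same overall route as the paper's proof: a case analysis over the stratifying rules, using Proposition~\ref{prop:syneq} to equate the root sequents, discharging the Weakening cases (including $WC$) by Proposition~\ref{prop:wk}, and handling $CA$ by naturality of currying in the context argument, i.e.\ $f;\Lambda(g)=\Lambda((f\otimes 1);g)$. You diverge only on $AL$ and $AR$. For $AL$ the paper uses just symmetry and bifunctoriality of $\otimes$: the morphism $1_{\sbr\Gamma}\otimes\delta_{J,J',\sbr\theta}$ acts on tensor factors that feed exclusively into $\sbr F$, so it slides past $J_1\cdot\sbr{M}$ by the interchange law; the naturality square of Eqn.~\ref{eq:nat} that you invoke concerns $J\cdot f\otimes K\cdot f$ acting on the contracted object and is not the relevant fact here (though you also cite bifunctoriality, which is all that is needed, so nothing breaks). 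For $AR$ the paper again claims symmetry and bifunctoriality suffice, whereas you anticipate a chase involving $\pi$, the distributivity isomorphisms of $\mathcal R$, and the monoidal structure of the functor $J_1\cdot(-)$; your instinct is the more careful one, since the Application clause silently identifies $\sbr{J_1\cdot\Gamma'}$ with $J_1\cdot\sbr{\Gamma'}$ and the paper states no coherence axiom relating $\delta$ to $\pi$ and distributivity, so you are making explicit a step the paper elides rather than introducing an unnecessary complication. Neither divergence affects the correctness of your argument.
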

\begin{proof}
By inspecting the rules. Prop.~\ref{prop:str} states that the root
sequents are equal and the trees are well-formed. For WC (and the other rules involving the
stratification of Weakening) this is an immediate consequence of
Prop.~\ref{prop:wk}. For AL and AR the equality of the two sides is an
immediate consequence of symmetry in $\mathcal G$ and the
functoriality of the tensor $\otimes$. For CA the equality of the two sides is an instance of
the general property in a symmetric monoidal closed category that $f;\Lambda(g)=\Lambda((f\otimes 1_{B'});g)$ for 
any $A\stackrel f\rightarrow B$, $B\otimes B'\stackrel g\rightarrow C$. 
\end{proof}
\begin{lemma}\label{lem:twk}
If $\nabla,\nabla'$ are   derivation trees consisting only of \emph{Contraction} and \emph{Weakening} with
a common root $\Sigma$ then
$\sbr{\Sigma^\nabla}=\sbr{\Sigma^{\nabla'}}$. 
\end{lemma}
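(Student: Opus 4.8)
The plan is to normalise both derivations to a common \emph{stratified} shape and then close the argument with the coherence data for $\delta$ and the monoidal unitors. Since Contraction and Weakening are both unary rules, each of $\nabla$ and $\nabla'$ is a linear chain of such steps sitting above a common leaf $\Sigma_0$ --- in the situation where this lemma is used, $\Sigma_0$ is the linearisation of $\Sigma$, which Lemma~\ref{lem:struni} shows to be uniquely determined (up to renaming). Thus the real content is that all ways of re-assembling $\Sigma$ from $\Sigma_0$ by merging and adjoining variables produce the same morphism in $\mathcal G$.

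First I would push every Weakening step below every Contraction step. Each adjacent pair in which Weakening lies above Contraction is swapped by the stratifying rule WC; iterating this, with the number of contraction steps lying below some weakening step as a strictly decreasing termination measure, rewrites both chains to the shape $\Sigma_0 \to [\text{contractions}] \to \Sigma_1 \to [\text{weakenings}] \to \Sigma$. By Proposition~\ref{prop:str} the root and leaf are preserved, and by Lemma~\ref{lem:streq} (the WC case, which rests on Proposition~\ref{prop:wk}) so is the interpretation. Moreover the intermediate sequent $\Sigma_1$ is forced by $\Sigma$: its environment is that of $\Sigma_0$ with all occurrences of each variable merged and no extra (weakened) variable present. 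So for both normalised chains the contraction block runs from $\Sigma_0$ to the \emph{same} $\Sigma_1$ and the weakening block from $\Sigma_1$ to $\Sigma$, and it only remains to show that the interpretation of each block depends on its endpoints alone.

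For the weakening block this is routine: weakenings of distinct variables insert the morphisms ${!}$ on disjoint tensor factors, hence commute by bifunctoriality of $\otimes$, and the remaining reassociations are handled by iterating Proposition~\ref{prop:wk} together with the monoidal-unit coherence ($\rho$, $\lambda$, the triangle law, $I$ terminal). For the contraction block, contractions on distinct variables again act on disjoint tensor factors and commute by bifunctoriality; the delicate case is a single variable occurring several times, where merging the occurrences is an iterated application of $\delta$ and I must show that all binary bracketings and orderings of these $\delta$'s coincide. This is a Mac~Lane-style coherence argument whose generating instances are precisely the associativity square~(\ref{eq:coh}) and the naturality square~(\ref{eq:nat}); one must also check that the semiring annotations stay consistent along the way, which they do because $+$ is associative and commutative and is resolved syntactically in the type system (cf.\ Proposition~\ref{prop:syneq}). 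I expect this last point --- coherence of the contraction block --- to be the main obstacle, as it is the only place where the full strength of the hypotheses on $\delta$, rather than plain monoidal bookkeeping, is really needed. Combining the two block-coherence facts with the meaning-preservation of WC yields $\sbr{\Sigma^\nabla} = \sbr{\Sigma^{\nabla'}}$.
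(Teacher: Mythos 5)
Your proposal is correct and follows essentially the same route as the paper, whose (much terser) proof likewise handles the weakening steps via Proposition~\ref{prop:wk}, reorderings of contractions of a single variable via the associativity square~(\ref{eq:coh}), and reorderings across distinct variables via naturality/bifunctoriality~(\ref{eq:nat}). Your version merely makes explicit the normalisation into a contraction block followed by a weakening block and the Mac~Lane-style bookkeeping that the paper leaves implicit.
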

\begin{proof}
Weakening commutes with any other rule (Prop.~\ref{prop:wk}).
Changing the order of multiple contraction of the same variable
  uses the associativity coherence property in Eqn.~\ref{eq:coh}.
Changing the order in which different variables are contracted
  uses the naturality coherence property in Eqn.~\ref{eq:nat}.
\end{proof}
\begin{lemma}\label{lem:streq2}
If $\nabla$ is a  derivation there exists a stratified derivation
tree $\nabla'$ which can be obtained from $\nabla$ by applying a
(finite) sequence of stratifying tree transformations. Moreover, $\sbr{\Sigma^\nabla}=\sbr{\Sigma^{\nabla'}}$.
\end{lemma}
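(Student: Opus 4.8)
The plan is to prove the existence and meaning-preservation parts together by exhibiting a terminating rewriting strategy on derivation trees and invoking the per-step results already established. First I would set up a measure on derivations that decreases under the stratifying transformations. The natural choice is a lexicographic combination: for each occurrence of a \emph{Weakening} or \emph{Contraction} node, measure the length of the path from that node down to the first node below it that is \emph{not} an \emph{Application} or \emph{Abstraction} (with an abstraction counting as ``blocking'' only when it binds the weakened/contracted variable, in which case the node gets consumed by \emph{Abs-weak} / \emph{Abs-con}); sum these path lengths, or better, take the multiset of these lengths ordered by the multiset order. The rules AL, AR, CA and WC each either strictly decrease one such path length (pushing a weakening/contraction past an \emph{Application} or \emph{Abstraction} node) or eliminate a weakening/contraction node entirely (the \emph{Abs-weak}, \emph{Abs-con} collapses), while never creating a weakening/contraction node with a longer path. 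The only subtlety is bookkeeping for contraction past \emph{Application}: it can split into a contraction that now sits in the function sub-derivation or argument sub-derivation, but in each case strictly closer to a binding abstraction or to the root, so the measure still drops.

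Next I would argue termination: since the measure strictly decreases in a well-founded order under every stratifying step, any maximal sequence of transformations is finite. It then remains to check that a derivation to which no stratifying rule applies is in fact stratified. Here I would walk through the clauses of the definition of \emph{stratified}: if no \emph{Weakening}/\emph{Contraction} node can be pushed further, then every such node has been driven either to a binding \emph{Abstraction} immediately below it — where it has been absorbed into \emph{Abs-weak}/\emph{Abs-con} and hence no longer appears as a separate $wk,co$ node — or down to the root, where the accumulated weakenings and contractions coalesce (using the admissible \emph{Weakening+} and \emph{Contraction+} rules) into a single $wk{+}$ node at the root with a single $co{+}$ node directly above it. All sequents strictly above that $co{+}$ node are then linear, since every variable introduced by weakening has been removed and every duplicated variable has been split off, which is exactly the last clause of the definition. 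Combined with the collapsing of \emph{Abs-weak}/\emph{Abs-con} this gives precisely a stratified tree; one should also note that \emph{Weakening} nodes commute freely past \emph{Identity} only by being absorbed at the leaf, so no $wk$ node can be ``stuck'' above an identity axiom.

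For the meaning-preservation claim, I would invoke Lemma \ref{lem:streq}: each individual stratifying transformation $\nabla\Rightarrow\nabla'$ satisfies $\sbr{\Sigma^\nabla}=\sbr{\Sigma^{\nabla'}}$, and by Proposition \ref{prop:str} the root sequent is unchanged at every step, so $\sbr{\Sigma^\nabla}$ is well-defined throughout. Since the final stratified tree $\nabla'$ is reached from $\nabla$ by finitely many such steps, transitivity of equality yields $\sbr{\Sigma^\nabla}=\sbr{\Sigma^{\nabla'}}$. The only genuine obstacle I anticipate is making the termination measure robust against the case-split in the contraction-past-application rules (AL versus AR, and the way a single contraction can migrate into one of two sub-derivations): one must choose the measure so that migrating a contraction into a sub-derivation does not increase the relevant path length, which is why I would prefer the multiset-of-distances formulation over a naive total count — the node removed from its old position is strictly heavier than the node(s) it is replaced by. Everything else is routine case inspection already discharged by Propositions \ref{prop:syneq}, \ref{prop:str} and Lemma \ref{lem:streq}.
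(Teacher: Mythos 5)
Your overall strategy—normalize by pushing weakenings and contractions rootward, absorb them into \emph{Abs-weak}/\emph{Abs-con} at a binding abstraction, coalesce the residue at the root, and then get meaning preservation by chaining the per-step equalities—is exactly the paper's argument; your multiset-of-distances termination measure is in fact more explicit than the paper's terse appeal to finite tree height, and it correctly handles the AL/AR case split.

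There is, however, one small but genuine omission in your meaning-preservation step. You derive $\sbr{\Sigma^\nabla}=\sbr{\Sigma^{\nabla'}}$ purely from Lemma~\ref{lem:streq} plus transitivity over the chain of stratifying transformations. But the last move in your construction—taking the chain of individual \emph{Weakening} and \emph{Contraction} nodes that has accumulated at the root and coalescing it into a single $wk{+}$ node with a single $co{+}$ node above it—is not one of the four stratifying rules (AL, AR, CA, WC), and it is not meaning-preserving for free: it implicitly reorders multiple contractions of the same variable and contractions of different variables into some canonical order. The soundness of that reordering is precisely what Lemma~\ref{lem:twk} provides, resting on the coherence diagrams (\ref{eq:coh}) and (\ref{eq:nat}) for $\delta$; the paper's proof cites it explicitly for ``the final chain of weakening and contractions.'' You mention \emph{Weakening+} and \emph{Contraction+} in the existence half but never invoke Lemma~\ref{lem:twk} in the semantic half, so as written your transitivity argument does not cover the final coalescing step. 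Adding that one citation closes the gap; everything else is sound.
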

\begin{proof}
The stratifying transformations push contraction and weakening through
any other rules and the derivation trees have finite height. If a contraction or weakening
cannot be pushed through a rule it means that the rule is an abstraction on the 
variable being contracted or weakened, and we replace the two rules with 
either \emph{Abs-con} or \emph{Abs-weak}. 

For the weakening and contractions pushed to the bottom of the tree the order
is irrelevant, according to Lem.~\ref{lem:twk}, therefore we replace them with
a \emph{Contraction+} and \emph{Weakening+} which perform all the required weakening
and contraction in one step each.
The result is a stratified  tree.

Then we
apply induction on the chain of stratifying rules using
Lem.~\ref{lem:streq} for every rule application and 
Lem.~\ref{lem:twk} for the final chain of weakening
and contractions.
\end{proof}
\begin{theorem}[Coherence]
For any   derivation trees ${\nabla_1},{\nabla_2}$ with common root $\Sigma$,  $\sbr{\Sigma^{\nabla_1}}=\sbr{\Sigma^{\nabla_2}}$.
\end{theorem}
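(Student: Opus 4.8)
The plan is to normalise both derivations to a common shape. Given $\nabla_1,\nabla_2$ with common root $\Sigma=\Gamma\vdash M:\theta$, I would first apply Lemma~\ref{lem:streq2} to each, obtaining \emph{stratified} derivations $\nabla_1',\nabla_2'$ of $\Sigma$ with $\sbr{\Sigma^{\nabla_i}}=\sbr{\Sigma^{\nabla_i'}}$. The theorem then reduces to the claim that any two stratified derivations of a given sequent have the same interpretation, since then $\sbr{\Sigma^{\nabla_1}}=\sbr{\Sigma^{\nabla_1'}}=\sbr{\Sigma^{\nabla_2'}}=\sbr{\Sigma^{\nabla_2}}$.

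To establish that claim I would peel off the bottom two nodes of a stratified derivation, which by definition are an instance of \emph{Weakening+} at the root and an instance of \emph{Contraction+} above it, with every higher sequent linear. Reading downwards, the \emph{Weakening+} step can only re-introduce the variables of $\Gamma$ that are not free in $M$ (the side condition of weakening), whereas every variable occurring in a linear sequent is free in its term; hence the sequent just below the \emph{Contraction+} node is forced to be the restriction of $\Sigma$ to $fv(M)$, independently of the derivation. The \emph{Contraction+} step then merges, for each $x\in fv(M)$ with $n_x$ occurrences, a block of $n_x$ fresh copies back into $x$, so the linear sequent above it has as its term a copy of $M$ with distinct fresh names for distinct occurrences; after padding with trivial \emph{Weakening+}/\emph{Contraction+} nodes, Lemma~\ref{lem:struni} applies and shows this subderivation --- and hence the annotations of that linear sequent --- is uniquely determined up to a renaming of the fresh copies. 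Renaming of these auxiliary variables leaves the interpretation unchanged, so the linear cores of $\nabla_1'$ and $\nabla_2'$ have equal interpretation; and the two \emph{Contraction+}/\emph{Weakening+} prefixes both pass from the (renamed-to-coincide) linear sequent down to $\Sigma$ using only contraction and weakening, so by Lemma~\ref{lem:twk} they denote the same morphism. Composing the agreeing linear core with this common prefix gives $\sbr{\Sigma^{\nabla_1'}}=\sbr{\Sigma^{\nabla_2'}}$, as required.

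The step I expect to be the real obstacle is the extension of Lemma~\ref{lem:struni}, stated for \emph{linear} sequents, to an arbitrary root: one must check that the \emph{Weakening+}/\emph{Contraction+} prefix, and the linear sequent it sits above, are genuinely pinned down up to the choice of fresh names. This relies on the fact that in a linear derivation every choice left open by the rules --- in particular the way the context is split at \emph{Application} and the annotations introduced --- is ultimately recorded in the conclusion sequent, and that the only remaining freedom (variable renaming, and the reordering of context entries it can induce) is invisible to the semantics. The latter is precisely the point at which the coherence diagrams~\eqref{eq:coh} and~\eqref{eq:nat} for the quantitative diagonal $\delta$, together with the symmetry of $\mathcal G$, do their work; since all of this is already encapsulated in Lemmas~\ref{lem:twk} and~\ref{lem:streq2}, the residual burden is mainly the bookkeeping needed to line up two stratified derivations of the same sequent.
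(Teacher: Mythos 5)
Your proposal follows essentially the same route as the paper's own proof: stratify both derivations via Lemma~\ref{lem:streq2}, reduce to the linear core sitting above the \emph{Contraction+}/\emph{Weakening+} prefix, invoke Lemma~\ref{lem:struni} for uniqueness of that core, and Lemma~\ref{lem:twk} for the prefix. You merely spell out the bookkeeping (how the prefix and the linear sequent are pinned down up to renaming) that the paper's terser proof leaves implicit, which is a faithful elaboration rather than a different argument.
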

\begin{proof}
  Using Lem.~\ref{lem:streq}, $\nabla_1,\nabla_2$ must be effectively stratifiable into
   trees $\nabla_1',\nabla_2'$ with the same root. Using Lem.~\ref{lem:streq2},
  $\sbr{\Sigma^{\nabla_i}}=\sbr{\Sigma^{\nabla_i'}}$ for
  $i=1,2$. We first reduce $\Sigma^{\nabla_i}$ to a linear form using \emph{Contraction+} and \emph{Weakening+} then use Lem.~\ref{lem:struni},
  ${\Sigma^{\nabla_1'}}={\Sigma^{\nabla_2'}}$. 
\end{proof}

\section{Case study: automated pipelining}\label{chap:pipes}
Let us instantiate the abstract type system to a non-trivial resource-sensitive type system: automatic pipelining of computations.  This is interesting for two reasons. First we get to work with a complex resource semiring of \emph{execution schedules}. Second, for the type inference we show how the intrinsic constraints system generated over the resource semiring can be seamlessly combined with additional extrinsic constraints, in our case imposing a pipelining (first-in-first-out) discipline on the schedules. 

The concrete type system is an instance of the generic type system when $\mathcal J$ is taken to be the semigroup semiring (i.e.\ multisets) of one-dimensional contractive affine\footnote{The word ``affine'' has two distinct technical meanings, both standard: logical vs. algebraic. The overloading should be unambiguous in context.} transformations
$
\mathcal J = \mathbb N[\aff].
$
We will use the notation $J=[x_1,x_2,\ldots,x_n]$ to represent some $J$ as a multiset; we call $x_i$ its \emph{stages} and $J$ a \emph{schedule}.

Contractive affine transformations enable composition of timed functions in a natural way. Our view of timing is \emph{relative}: in a type $([x_1,\ldots,x_n]\cdot A)\multimap B$ (brackets added for emphasis) we take the execution of the function to always be, by convention, the unit interval. This is a call-by-name language so each argument is re-evaluated when needed (to prevent needless re-evaluation we can use the store explicitly). The size of the multiset indicates that the function uses its argument $n$ times. Contractive affine transformation $x_i$, when applied to the unit interval, yields a sub-interval indicating the timing of execution of the $i$-th use of the argument. Compositionality is given automatically by the fact that the product of contractive affine transformations is a a contractive affine transformation. Composing time represented as explicit intervals can be done but is more complicated. 

A \emph{contractive} affine transformation is represented
$
x = 
\left(
\begin{matrix}
s & p \\
0 & 1
\end{matrix}
\right)
\in\aff$, where  $0\leq s\leq 1$ and $0\leq s+ p\leq 1.
$
The factor $s$ is a \emph{scaling factor}, representing the \emph{relative duration}
of a computation, and $p$ is a \emph{phase}, representing a \emph{relative delay} for the same  computation. A one-dimensional affine transformation acting on the unit interval, in affine representation, can be used to represent the duration of the computation of one run of a term starting at $t_0$ and ending at $t_1$:
\[
\left(
\begin{matrix}
s & p \\
0 & 1
\end{matrix}
\right)
\times
\left(
\begin{matrix}
0 & 1 \\
1 & 1
\end{matrix}
\right)
=\left(
\begin{matrix}
p & s+p \\
1 & 1
\end{matrix}
\right)\defeq
\left(
\begin{matrix}
t_0 & t_1 \\
1 & 1
\end{matrix}
\right)
\]

\begin{proposition}
If $x,y\in\aff$ then $x\times y\in\aff$. 
\end{proposition}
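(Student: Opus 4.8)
The plan is to compute $x\times y$ as a product of matrices and then verify, coefficient by coefficient, the two inequalities defining $\aff$. Write $x=\left(\begin{matrix} s & p\\ 0 & 1\end{matrix}\right)$ and $y=\left(\begin{matrix} s' & p'\\ 0 & 1\end{matrix}\right)$. I will use that a contractive affine transformation satisfies $0\le p$, $0\le s$ and $s+p\le 1$ (hence also $s\le 1$), and similarly for $s',p'$; the condition $0\le p$ — that the start time $t_0=p$ is non-negative — is needed even though it is only implicit in the displayed constraints, since without it $x\times y$ can fall outside $\aff$ (e.g.\ $x=\left(\begin{matrix} 1/2 & -1/2\\ 0 & 1\end{matrix}\right)$, $y=\left(\begin{matrix} 0 & 0\\ 0 & 1\end{matrix}\right)$). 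Multiplying yields $x\times y=\left(\begin{matrix} ss' & sp'+p\\ 0 & 1\end{matrix}\right)$, which already has the correct shape (bottom row $(0,1)$), so it remains only to show $0\le ss'\le 1$ and $0\le ss'+sp'+p\le 1$.

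The first, $0\le ss'\le 1$, is immediate from $s,s'\in[0,1]$. For the second, rewrite $ss'+sp'+p = s(s'+p')+p$: non-negativity follows from $s\ge 0$, $s'+p'\ge 0$ and $p\ge 0$, and the upper bound follows from $s'+p'\le 1$ together with $s\ge 0$, giving $s(s'+p')+p\le s\cdot 1+p = s+p\le 1$. This completes the argument.

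As a remark I would also record the conceptual reading behind the computation: identifying $\left(\begin{matrix} s & p\\ 0 & 1\end{matrix}\right)$ with the affine map $t\mapsto st+p$, the constraints say precisely that this map is non-decreasing, non-expansive ($0\le s\le 1$), and carries $[0,1]$ into $[0,1]$ (it sends $0\mapsto p\ge 0$ and $1\mapsto s+p\le 1$ and is monotone in between), while matrix multiplication is exactly composition of such maps — a class plainly closed under composition, which reproves the proposition coordinate-free. I do not expect any substantive obstacle here; the statement is essentially a one-line matrix calculation, and the only points needing a little care are keeping track of the hypothesis $p\ge 0$ and applying the bound $s'+p'\le 1$ (rather than $s\le 1$) at the right place in the upper-bound estimate.
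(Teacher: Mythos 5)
Your proof is correct; the paper itself offers no proof of this proposition, treating it as an immediate matrix computation, so there is nothing to compare against beyond noting that your calculation $x\times y=\left(\begin{smallmatrix} ss' & sp'+p\\ 0 & 1\end{smallmatrix}\right)$ and the estimate $s(s'+p')+p\le s+p\le 1$ are exactly the intended argument. The one substantive point you raise is worth keeping: the displayed definition of $\aff$ only requires $0\le s\le 1$ and $0\le s+p\le 1$, and as your example with $s=1/2$, $p=-1/2$ shows, that class is \emph{not} closed under multiplication; the additional hypothesis $p\ge 0$ (equivalently, that the image interval $[p,s+p]$ lies inside $[0,1]$, which is clearly what the surrounding text intends by ``yields a sub-interval'' of the unit interval) is what makes the closure go through. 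So your proof is right, and it also correctly identifies that the paper's stated constraints should be read as ``$0\le s\le 1$ and $0\le p\le s+p\le 1$'' for the proposition to hold as stated.
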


When we refer to the timing of a computation, and it is unambiguous from context, we will sometimes use just $x$ to refer to its action on the unit interval $u=[0,1]$. For example, if we write $x\subseteq x'$ we mean $x\cdot u\subseteq x'\cdot u$, i.e.\   $[p,s+p]\subseteq[p',s'+p']$, i.e. $p\geq p'$ and $s+p\leq s'+p'$.  If we write $x\leq x'$ we mean the Egli-Milner order on the two intervals, $x\cdot u\leq x'\cdot u$, i.e.\   $p\leq p'$ and $s+p\leq s'+p'$.  If we write $x\cap x'=\emptyset$ we mean the two intervals are disjoint, $x\cdot u\cap x'\cdot u=\emptyset$, etc.

Contractive affine transformations form a semigroup with matrix product as multiplication and unit element
$
I\defeq
\left(
\begin{matrix}
1 & 0 \\
0 & 1
\end{matrix}
\right)
$.
The semiring of a semigroup $(\mathcal G,\times,I)$ is  a natural construction from any semiring and any semigroup. In our case the semiring is natural numbers ($\mathbb N$), so the semigroup semiring is the set of finitely supported functions $J:\aff\rightarrow \mathbb N$ with 
\begin{align}
	\ringzero (x) &= 0\label{eq:zero}\\
	\ringunit (x) &= \begin{cases}
	1 & \text{if } x = I\\
	0 & \text{otherwise}
	\end{cases}\label{eq:unit}\\
	(J+ K)(x) &= J(x)+K(x)\label{eq:add}\\
	(J\times K)(x) &= \sum_{\substack{y,z\in\aff\\y\times z=x}} J(y)\times K(z).\label{eq:mul}
\end{align}
This is isomorphic to finite multisets over $\aff$. We use interchangeably whichever representation is more convenient. 

\subsection{A concrete programming language}

A concrete programming language is obtained by adding a family of functional constants in the style of Idealized Algol~\cite{reynolds1997essence}. Let us call it PIA (Pipelined-IA). We take commands and integer expressions as the base types, 
$
\sigma::=\lcomt\mid\lexpt.
$

Ground-type operators are provided with explicit timing information. For example, for commands we have a family of timed composition operators (i.e.\ schedulers):
\[
\lcompc_{x,y}:[x]\cdot\lcomt\multimap [y]\cdot\lcomt\multimap\lcomt.
\]
The fact that $x,y$ are contractive is a \emph{causality} constraint which says that each argument must execute within the interval in which the main body of the function is running which is, by convention, the unit interval. Sequential composition is a scheduler in which the arguments are non-overlapping, with the first argument completing before the second argument starts:
$\mathsf{seq}_{x,y} = \lcompc_{x,y}$ where $  x\leq y$ and $x\cap y=\emptyset$ (which we write $x<y$). Parallel composition is simply
$
\mathsf{par}_x = \lcompc_{x,x},
$
with both arguments initiating and completing execution at the same time. Schedulers that are neither purely sequential nor parallel, but a combination thereof, are also possible. 

Arithmetic operators are also given explicit timings, but branching needs to be sequential.
\begin{align*}
\lopc_{x,y}&:[x]\cdot\lexpt\multimap [y]\cdot\lexpt\multimap\lexpt,\\
\lifc_{x,y} &: [x]\cdot\lexpt\multimap [y]\cdot\sigma\multimap [y]\cdot\sigma\multimap\sigma,
	\quad   x<  y.
\end{align*}

Assignable variables are handled by separating read and write access, as is common for IA. Let the type of \emph{acceptors} be defined (syntactically) as $\lacct\defeq[w]\cdot\lexpt\multimap\lcomt$, where $w\in\aff$ is a system-dependent constant (writing to memory cannot usually be instantaneous). There is no stand-alone $\lvart$ type in PIA, instead the readers and writers to a variable are bound to the same memory location by a block variable constructor with signature:
\[
\lnewc_{\sigma,J,K}:(J\cdot\lexpt\multimap K\cdot\lacct\multimap \sigma)\multimap \sigma, \quad \sigma\in\{\lexpt, \lcomt\}.
\]
For programmer convenience $\lvart$-typed identifiers can be sugared into the language but, because the read and write schedules of access need to be maintained separately, the contraction rules become complicated (yet routine) so we omit them here. 

Finally, ground-type constants are 
$
1:\lexpt$ and $\lskipc:\lcomt.
$

In order to keep execution deterministic and timing predictable, no constants with data-dependent timing of execution can be allowed, such as recursion, iteration or semaphores. These restrictions are not onerous. Unbounded recursive (or iterated) executions cannot be in general pipelined, only finite unfoldings; we could handle this but it is a conceptually uninteresting complication. Semaphores are asynchronous computational features that also involve non-deterministic waiting for conditions to happen and cannot be pipelined. The language presented here must be understood as a sub-language of a larger ambient programming language, defining those computations that can be pipelined. 

\begin{example} \label{ex:incx}
The program $\lnewc\,x.\,x:={!}x+1$, written functionally (while separating the reader and the acceptor) as $\lnewc(\lambda x_r\lambda x_w.x_w(\mathsf{add}\,x_r\,1))$ is typable. We give one possible way to annotate the constants with timing so that the term types:
$
\lnewc_{\lcomt,[w\times b],[w]}(\lambda x_r\lambda x_w.x_w(\mathsf{add}_{b,b}\,x_r\,1))
$
which, written in a fully sugared notation, would be:
$
\lnewc_{\lcomt,[w\times b],[w]}\,x:={!}x+_{b,b}1.
$
Note that addition here is given the schedule of a parallel operation with some arbitrary schedule $b$.
\end{example}

\subsection{Type inference for automated pipelining}\label{sec:typin}\label{sec:pip}

Note that the recipe from Sec.~\ref{sec:gti} cannot be immediately applied because there is no (off-the-shelf) SMT solver for $\mathbb N[\text{Aff}_1^c]$. We need to run the SMT in two stages: first we calculate the sizes of the multiset (as in SCC inference), which allows us to reduce constraints in $\mathbb N[\text{Aff}_1^c]$ to constraints in $\text{Aff}_1^c$. Then we map equations over $\text{Aff}_1^c$ into real-number equations, which can be handled by the SMT solver. There is a final, bureaucratic, step of reconstructing the multi-sets from the real-number values. To fully automate the process we also use Hindley-Milner type inference to determine the underlying simple-type structure~\cite{milner1978theory}. 

Multiset size (SCC) type inference is presented in detail elsewhere~\cite{DBLP:conf/popl/GhicaS11}, but we will quickly review it here in the context of PIA. We first interpret schedules as natural numbers, representing their number of stages $J\in\mathbb N$. Unknown schedules are variables, schedules with unknown stages but fixed size (such as those for operators) are constants. A type derivation results in a constraint system over $\mathbb N$ which can be solved by an SMT tool such as Z3~\cite{Z3}. More precisely, Z3 can attempt to solve the system, but it can be either unsatisfiable in some cases or unsolvable as nonlinear systems of constraints over $\mathbb N$ are generally undecidable. 

As a practical observation, solving this constraint using general-purpose tools will give an arbitrary solution, if it exists, whereas a ``small'' solution is preferable. In~\cite{DBLP:conf/popl/GhicaS11} we give a special-purpose algorithm guaranteed to produce solutions that are in a certain sense minimal. To achieve a small solution when using Z3 we set a global maximum bound which we increment on iterated calls to Z3 until the system is satisfied. 

The next stage is to instantiate the schedules to their known sizes, and to re-run the inference algorithm, this time in order to compute the stages. This stage proceeds according to the general type-inference recipe, resulting in a system of constraints over the $\mathbb N[\text{Aff}_1^c]$ semiring, with the particular feature that all the sizes of all the multisets is known. We only need to specify the schedules for the constants:
\begin{center}
  \AxiomC{ }
 \UnaryInfC{$\constr{\emptyset\vdash 1: \lexpt}{\true}$}
  \DisplayProof\\[1.5ex]
  \AxiomC{ }
 \UnaryInfC{$\constr{\emptyset\vdash \lskipc: \lcomt}{\true}$}
  \DisplayProof\\[1.5ex]
  \AxiomC{ }
 \UnaryInfC{$\constr{\emptyset\vdash \lopc_{x,y}:[x]\cdot\sigma\multimap[y]\cdot\sigma\multimap\sigma}{\{x\neq I, y\neq I\}}$}
  \DisplayProof\\[1.5ex]
  \AxiomC{}
  \UnaryInfC{$\constr{\emptyset\vdash\lifc_{x,y}:[x]\cdot\lexpt\multimap[y]\cdot\sigma\multimap[y]\cdot\sigma\multimap\sigma}{x< y}$}
  \DisplayProof\\[1.5ex]
  \AxiomC{}
 \UnaryInfC{$\constr{\emptyset{\vdash} \lnewc_{\sigma,J,K}:(J{\cdot}\lexpt\multimap K{\cdot}\lacct\multimap\sigma)\multimap\sigma}{\{0\not\in K\}}$}
  \DisplayProof
\end{center}
 In the concrete system it is useful to characterize the resource usage of families of constants also by using constraints, which can be simply combined with the constraints (in the theory of the semiring) produced by the generic type inference algorithm. The language of constraints itself can be extended arbitrarily, provided that eventually we can represent it into the language of our external SMT solver, Z3. The constraints introduced by the language constants are motivated as follows:
\begin{description}
\item[op:] We prevent the execution of any of the two arguments to take the full interval, because an arithmetic operation cannot be computed instantaneously.
\item[if:] The execution of the guard must precede that of the branches.
\item[new:] The write-actions cannot be instantaneous.  
\end{description}

This allows us to translate the constraints into real-number constraints. Solving the system (using Z3) gives precise timing bounds for all types. However, this does not guarantee the fact that computations can be pipelined, it just establishes timings. 
In order to force a pipeline-compatible timing discipline we need to add extra constraints guaranteeing the fact that each timing annotation $J$ is in fact a proper pipeline. Two stages $x_1,x_2\in\aff$ are \emph{FIFO} if they are Egli-Milner-ordered, $x_1\leq x_2$. They are \emph{strictly FIFO}, written $x_1\lhd x_2$ if they are FIFO and they do not start or end at the same time, i.e. if $x_i\cdot [0,1] = [t_i,t_i']$ then $t_0\neq t_0'$ and $t_1\neq t_1'$. 

\newcommand{\pipe}{\mathsf{Pipe}}
\begin{definition}\label{def:pipe}
We say that a schedule $J\in\mathbb N[\aff]$ is a \emph{pipeline}, written $\pipe(J)$, if and only if $\forall x\in \aff, J(x)\leq 1$ (i.e.\ $J$ is a proper set) and for all $x, x'\in J$,  either $x\lhd x'$ or $x'\lhd x$ or $x=x'$. 
\end{definition}

Given a system of constraints $\chi$ over $\mathbb N[\aff]$, before solving it we augment it with the condition that every schedule is a proper pipeline: for any $J$ used in $\chi$, $\pipe(J)$. Using the conventional representation (scaling and phase), the usual matrix operations and the pipelining definitions above we can represent $\chi$ as a system of constraints over $\mathbb R$, and solve it using Z3. 

\paragraph{Implementation note.} For the implementation, we enforce arbitrary orders on the stages of the pipeline and, if that particular order is not satisfiable then a different (arbitrary) order is chosen and the process is repeated. However, spelling out the constraint for the existence of a pipelining order $\lhd$ for any schedule $J$ would entail a disjunction over all possible such orders, which is $\mathcal O(n!)$ in the size of the schedule, for each schedule, therefore not realistic. However, if the systems of constraints have few constants and mostly unknowns, i.e.\ we are trying to find a schedule rather than accommodate complex known schedules, our experience shows that this pragmatic approach is reasonable. 

\paragraph{}
Ex.~\ref{ex:incx} is from a scheduling point of view quite trivial because no pipelining takes
place. We consider two more complex examples below. 

\begin{example}\label{ex:fx4}
Let us first consider the simple problem of using three parallel adders to compute the sum $f x + f x + f x + f x$ when we know the timings of $f$. Suppose $f:([(0.5, 0.1);(0.5, 0.2)]\cdot \lexpt\multimap \lexpt$, i.e.\ it is a two-stage pipeline where the execution of the argument takes half the time of the overall execution and have relative delays of 0.1 and 0.2 respectively. We have the choice of using three adders with distinct schedules $+_i:[x_i]\cdot\lexpt\multimap [y_i]\cdot\lexpt\multimap\lexpt$ ($i\in\{1,2,3\}$) so that the expression respects the pipelined schedule of execution of $f$. The way the operators are associated is relevant: $(f x +_2 f x) +_1 (f x +_3 f x)$. Also note that part of the specification of the problem entails that the adders are trivial (single-stage) pipelines. Following the algorithm above, the typing constraints are resolved to the following:
\begin{align*}
+_1 &:[(0.5, 0.265625)]\cdot\lexpt \multimap [(0.5, 0.25)]\cdot\lexpt\multimap\lexpt\\
+_2 &:[(0.5, 0.21875)]\cdot\lexpt \multimap [(0.5, 0.25)]\cdot\lexpt\multimap\lexpt\\
+_3 &:[(0.5, 0.375)]\cdot\lexpt \multimap [(0.5, 0.25)]\cdot\lexpt\multimap\lexpt
\end{align*}
In the implementation, the system of constraints has 142 variables and 357 assertions, and is solved by Z3 in circa 0.1 seconds on a high-end desktop machine.
\end{example}

\begin{example}\label{ex:fx4}
Let us now consider a more complex, higher-order example. Suppose we want to calculate the convolution ($*$) of a pipelined function ($f:[(0.5, 0.1);(0.5, 0.2)]\cdot\lexpt\multimap\lexpt$) with itself four times. And also suppose that we want to use just two instances of the convolution operator $*_1, *_2$, so we need to perform contraction on it as well. The simple type of the convolution operator is
$
(*):(\lexpt\rightarrow\lexpt)\rightarrow(\lexpt\rightarrow\lexpt)\rightarrow \lexpt\rightarrow\lexpt.
$ For hardware compilation this corresponds to the following circuit diagram:
\begin{center}
\includegraphics{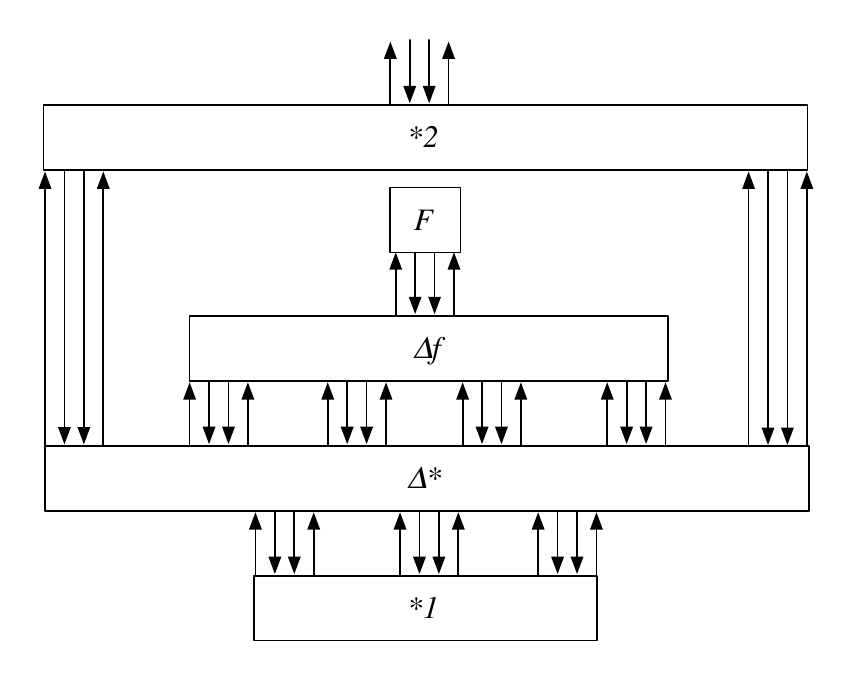}
\end{center}
By $F$ we denote the circuit implementing the function $f$, $*1, *2$ the two instances of the convolution operator, $\Delta f$ the four-way contraction of $f$ and $\Delta *$ the contraction of the convolution operation itself. Every port in this diagram must observe a pipelining discipline. 

The implementation of $f$ and $*$ are unknown, so we want to compute the timings for the term
\begin{align*}
(*_1)&:J_1^{vi}\cdot(J_1^i\cdot(J_1^{ii}\cdot\lexpt\multimap\lexpt)\rightarrow J_1^{iv}\cdot(J_1^{iii}\cdot\lexpt\multimap\lexpt)\multimap J_1^v\cdot\lexpt\multimap\lexpt),\\
(*_2)&:J_2^{vi}\cdot(J_2^i\cdot(J_2^{ii}\cdot\lexpt\multimap\lexpt)\rightarrow J_2^{iv}\cdot(J_2^{iii}\cdot\lexpt\multimap\lexpt)\multimap J_2^v\cdot\lexpt\multimap\lexpt),\\
f&:J_3\cdot([(0.5, 0.1);(0.5, 0.2)]\cdot\lexpt\multimap\lexpt)\vdash
(f *_1 f) *_2 (f *_1 f): \theta.
\end{align*}
The constraint system has 114 variables and 548 assertions and is solved by Z3 in 0.6 seconds on a high-end desktop machine. The results are:
\begin{align*}
J_1^i &=J_1^{iv}=J_2^i =J_2^{iv}=[(1.0, 0.0)]\\
J_1^{ii} &=J_1^{iii}=J_1^v=J_2^{ii} =J_2^{iii}=J_2^v=[(0.5, 0.1);(0.5, 0.2)]\\
J_1^{vi}&=J_3=[(0.5, 0.125);(0.5, 0.25);(0.5, 0.375);(0.5, 0.4375)]\\
J_2^{vi}&=[(0.25, 0.25);(0.25, 0.5);(0.25, 0.625)]
\end{align*}
\end{example}

\section{Timed games: semantics of PIA}\label{sec:gamhort}
Rather than give a conventional operational semantics to our programming language we define it denotationally, using game semantics. This has the technical advantage that the model is compositional by construction. Moreover, game semantics packs pertinent operational intuitions and can be effectively presented, therefore (arguably) not much is lost by eschewing the conventional syntax-oriented operational semantics. In support of this statement we mention our prior work on hardware~\cite{DBLP:conf/mpc/Ghica12} and distributed~\cite{fredrikssong13} compilation directly from the game-semantic model of a programming language. 
Game-semantic models are also well suited to modeling resource usage
explicitly by annotating moves with tokens representing resource
usage~\cite{DBLP:conf/popl/Ghica05}. We will use an annotated game
model here as well, starting from the game model of
ICA~\cite{DBLP:journals/apal/GhicaM08}. 

Note that since we are giving a denotational semantics the usual syntactic sanity checks (reduction preserves typing) do not apply. Instead, we must show that our model fits the categorical requirements of Sec.~\ref{sec:cf}. These requirements subsume and strengthen the syntactic sanity checks by lifting them to higher order terms and formulating them compositionally. 

This section assumes that the reader is familiar with the basic concepts of game
semantics. Tutorial introductions to game semantics are available, e.g.~\cite{DBLP:conf/lics/Ghica09}. 
For readability, all techincal proofs are given in a Sec.~\ref{sec:tec}.

\begin{definition}[Pre-arena]\label{def:prota}
A \emph{pre-arena} $A$ is a tuple $(M,\tau,E,\lambda,\vdash,{\asymp})$:
\begin{enumerate}
\item \label{i:moves}
 $M$ is a set of \emph{moves};
\item \label{i:timing}
$\tau: M\rightarrow [0,1]$ is a  \emph{timing function};
\item \label{i:label}
$\lambda:M\rightarrow \{\mv O,\mv P\}\times\{\mv Q, \mv A\}\times \{\mv M,\mv N\}$ is a \emph{labelling function};
\item \label{i:initial}
$E\subseteq M$ such that $\lambda(E)=(\mv O,\mv Q, x)$ for some $x\in\{\mv M, \mv N \}$;
\item \label{i:enable}
${\vdash}\subseteq M\times M$ is an \emph{enabling relation}, such that for any $n$ there is an $m$ such that $m\vdash n$ if and only if $n\not\in E$, and for any $m, n$, $m\vdash n$ implies 
\begin{enumerate}
\item \label{i:oqenable}
$(\pi_1\circ\lambda)(m)\not= (\pi_1\circ\lambda)(n)$,
\item \label{i:qenable}
$(\pi_2\circ\lambda)(m)=\mv Q$, 
\item \label{i:placehold}
if $(\pi_3\circ\lambda)(m)=\mv N$ then $(\pi_3\circ\lambda)(n)=\mv N$.
\end{enumerate}
\item \label{i:simult}
${\asymp}\subseteq M\times M$ is an equivalence relation such that for any $m, m', n, n'$ 
\begin{enumerate}
\item \label{i:simtim}
if $m\asymp n$ then $\tau(m)=\tau(n), (\pi_i\circ\lambda)(m)=(\pi_i\circ\lambda)(n)$ for $i=1,2$,
\item \label{i:simjus}
if $m\vdash n, m'\vdash n', n\asymp n'$ then $m\asymp m'$,
\item \label{i:simena}
if $m\vdash n, m\vdash n', (\pi_1\circ\lambda)(n)=\mv A, (\pi_1\circ\lambda)(n')=\mv A$ then $n\asymp n'$.
\end{enumerate}
\end{enumerate}
\end{definition}
Some of the game-semantic concepts are conventional (move, opponent-proponent, question-answer, enabling, initial move) but some are specific to timed games. We will use $q, a, o, p$ to stand for question, answer, opponent or proponent move if ambiguities are not introduced. We also use $\overline E$ to signify the set of \emph{final answers}, the answers to the initial questions in $E$.

The key new concept particular to timed systems is that of timing (Def.~\ref{def:prota}.\ref{i:timing}), assigning each move a time in the unit interval. The arenas of timed games introduce the notion of \emph{alternative} moves, moves that are simultaneous (in the arena) but only one of which can occur in an actual play. Alternative moves are related by $\asymp$. Answers to the same question are alternatives (Def.~\ref{def:prota}.\ref{i:simena}) as are a move and its dummy counterpart (Def.~\ref{def:prota}.\ref{i:simtim}). 

One of the $\asymp$-alternatives in a collection of moves is the \emph{dummy move}.
The notion of ``dummy move'' (or \emph{non-move}, the label $\mv N$) in Def.~\ref{def:prota}.\ref{i:label} corresponds to the principle that in a timed system observations are driven by timing: at any given moment we can observe a system to check whether it is producing any output or requesting any input. If that is the case this is modeled by a conventional, actual, move. But if that is not the case, especially if at a given time a move was possible or expected, the fact that no move occurred is relevant, and modeled by a dummy (non)move. If a dummy move enables another move, then that move must also be a dummy (Def.~\ref{def:prota}.\ref{i:placehold}). 

\begin{definition}[Arena]\label{def:arena}
	A \emph{precedence relation for arena} $A=(M,\tau,E,\lambda,\vdash,{\asymp})$, ${\prec_A}\subseteq M\times M$ is the minimum transitive relation such that:
\begin{enumerate}
\item if $m\vdash n$ then $m\prec_A n$;
\item if $\tau(m)<\tau(n)$ then $m\prec_A n$;\label{def:arenat}
\item if $m\asymp m' \prec_A n$ then $m\prec_A n$;
\item if $m\prec_A n'\asymp n$ then $m\prec_A n$;
\item \label{i:fj} if $q\vdash a,q'\vdash a'$ and $q\vdash q'$ then $a'\prec_A a$.
\end{enumerate}
	An \emph{arena} is a pre-arena that has a well-founded precedence relation. 
\end{definition}
Precedence is consistent with timing but it has a finer grain: even moves with the same timing may have a precedence relation, which indicates causality. As in synchronous digital systems, just because two signals have the same timing (are on the same cycle) does not mean they are truly simultaneous, as time itself is only an abstract approximation. Within the same timing we are just unable to further discern the \emph{value} of the time but we can still observe the \emph{order} of the events. This distinction is essential in preventing causal loops in composition. The last requirement (Def.~\ref{def:arena}.\ref{i:fj}) is the language-dependent requirement that all children of a thread terminate before the parent (the \emph{Fork} and \emph{Join} rules in game semantics of ICA).

For any arena $A$, two time intervals will play an important role, the time interval when a play \emph{may} execute $t_M$ and the time when a play \emph{must} execute $t_m$, defined as
\begin{align*}
t_M = [\text{inf}(\tau (E)), \text{sup}(\tau(\overline E))],\quad
t_m = [\text{sup}(\tau (E)), \text{inf}(\tau(\overline E))].
\end{align*}
A play in an arena may (must) execute after the earliest (latest) initial move and before the latest (earliest) final move. 
\begin{definition}
If $x\in\aff$ then the action of $x$ on $A(M,\tau,E,\lambda,\vdash,{\asymp})$ is 
$
x\cdot A = (M,\tau',E,\lambda,\vdash,{\asymp})
$,
where $\tau'(m)=t'$ if $x\times\left(\begin{smallmatrix} \tau(m) \\ 1 \end{smallmatrix} \right)=\left(\begin{smallmatrix} t'\\ 1 \end{smallmatrix} \right)$
\end{definition}
\begin{definition}\label{def:schedact}
If $J\in\mathbb N[\aff]$ is a schedule then the action of the schedule on the arena $A$ is 
$
J\cdot A = \biguplus_{x\in\aff}\biguplus_{n\leq J(x)}x\cdot A.
$
\end{definition}
Every stage in the schedule is allowed to act on the arena $A$. Moreover, if a schedule occurs several times in the schedule then as many copies are created from the arena as required. The notion of ``distinct copies of the same arena" can be formalised using either explicit tags or nominal techniques, but we avoid this formalisation here, whenever possible, as it is generally the case in presentations of game-semantic models, in order to keep the technicalities at bay. 

Let $[-,-]$ be the co-pairing of two functions and $\lambda^\bullet$ be a function like $\lambda$ except that the $\mv O, \mv P$ value are swapped. Let $A\otimes A'$ be the (disjoint) union of two arenas. 
\begin{definition}
If $A=(M,\tau,E,\lambda,\vdash,{\asymp})$ and $A'=(M',\tau',E',\lambda',\vdash',{\asymp'})$ are arenas such that $t_M(A)\subseteq t_m(A')$ then we define the \emph{arrow} arena as $A\multimap A'=(M\uplus M',\tau\uplus \tau',E',[\lambda^\bullet,\lambda'],{\vdash''},{\asymp}\uplus{\asymp'})$ where $\vdash''$ is defined as ${\vdash''}={\vdash}\uplus{\vdash'}\uplus\{(e',e)\mid e\in E, e'\in E', (\pi_3\circ \lambda)(e)=\mv N)\}\uplus\{(e',e)\mid e\in E, e'\in E', (\pi_3\circ \lambda)(e')=\mv M)\}$.
\end{definition}
The arrow arena has the conventional definition in game semantics, except for the \emph{causality} requirement that $t_M(A)\subseteq t_m(A')$ which ensures that all possible computations of the argument happens within the time bounds of the calling arena. Note that this condition is quite restrictive because it does not take into account the enabling relation, just the absolute earliest and latest possible moves in the arenas.
\begin{lemma}
If $J\in\mathbb N[\aff]$ and $A, A'$ are arenas then $J\cdot A$, $A\multimap A'$ and $A\otimes A'$ are arenas. 
\end{lemma}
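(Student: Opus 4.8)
The plan is to check, for each of $J\cdot A$, $A\otimes A'$ and $A\multimap A'$, that it is a pre-arena in the sense of Definition~\ref{def:prota} and that its precedence relation (Definition~\ref{def:arena}) is well-founded. I would treat the three uniformly, since each is a \emph{finite} disjoint union of arenas, every summand of which is $A$ or $A'$, possibly re-timed by some $x\in\aff$ and --- in the case of the arrow --- equipped with finitely many extra enabling edges running from the initial moves of $A'$ to those of $A$. The pre-arena axioms are preserved componentwise, so this part is largely by inspection; the points worth isolating are that the timing function of $x\cdot A$ still lands in $[0,1]$ because every $x\in\aff$ is contractive on the unit interval, that $\asymp$ and its compatibility axioms are untouched inside each summand, and that for $A\multimap A'$ the new enabling pairs $(e',e)$ with $e'\in E'$, $e\in E$ satisfy conditions~\ref{i:oqenable}--\ref{i:placehold}: the polarity flip $\lambda^\bullet$ on the $A$-component is exactly what makes~\ref{i:oqenable} hold, \ref{i:qenable} holds because $e'$ is an initial \emph{question}, and~\ref{i:placehold} is forced by the case conditions in the definition of $\vdash''$ (one also checks that the moves of $A\multimap A'$ lacking an enabler are precisely $E'$).

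For well-foundedness I would first record two facts about an \emph{arbitrary} arena $B$, both immediate from well-foundedness through clause~\ref{def:arenat} (``$\tau(m)<\tau(n)$ implies $m\prec n$''): (i) the timing image $\tau_B(M_B)\subseteq[0,1]$ has no infinite strictly decreasing sequence, since such a sequence would be an infinite $\prec_B$-descending chain; and (ii) $m\prec_B n$ implies $\tau(m)\le\tau(n)$, by induction on the derivation of $m\prec_B n$, the base cases for enabling and for the fork--join clause~\ref{i:fj} being forced by acyclicity, as violating either would close a $2$-cycle through clause~\ref{def:arenat}. The crucial observation is that (i) and (ii) transfer to each construction \emph{without circularity}: the timing image of the construction is a finite union of monotone affine images of $\tau_A(M)$ and $\tau_{A'}(M')$, and a finite union of sets with no infinite decreasing sequence has none; and the induction for (ii) uses only (ii) for $A$ and $A'$ together with the causality side-condition $t_M(A)\subseteq t_m(A')$ of the arrow, which supplies $\tau(e')\le\tau(e)$ for the new edges and, in the single mixed instance of~\ref{i:fj}, forces the two answers involved to be \emph{final} answers of their arenas, so that $\sup\tau(\overline{E_A})\le\inf\tau(\overline{E_{A'}})$ gives the inequality.

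With (i) and (ii) in hand, suppose there is an infinite $\prec$-descending chain in the construction. By (ii) its timings are non-increasing, hence by (i) eventually constant, equal to some $t^\ast$. At a fixed time level there are no cross-edges between distinct summands of $J\cdot A$, nor between the two components of $A\otimes A'$ (cross-edges there come only from clause~\ref{def:arenat}, which strictly raises the timing); and for the arrow the only coincident-time cross-edges come from the new $E'$-to-$E$ edges and their $\asymp$-closure --- which run ``into'' the $A$-component --- while no $\prec$-edge can land on a move initial in $A\multimap A'$ without strictly raising the timing. Hence the tail of the chain is confined to a single summand $B$, and a short ``no detour'' argument (any apparent detour through another summand strictly raises the timing, so is already a clause-\ref{def:arenat} edge within $B$) identifies $\prec$ on that summand with $\prec_B$; so the tail is an infinite $\prec_B$-descending chain, contradicting that $B$ is an arena.

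I expect the main obstacle to be the arrow arena: it is the only construction in which the two components become genuinely entangled, through the new initial-move enabling edges and through the mixed fork--join edges, both of which can occur at coincident times. Showing that these links cannot sit on a descending cycle --- essentially because no $\prec$-edge returns to the initial moves of $A\multimap A'$ at the same time level, and nothing $\prec$-precedes a final answer of $A'$ at the same time level --- is precisely where the causality hypothesis $t_M(A)\subseteq t_m(A')$ does its work, and where the case analysis must be done with care. Everything else --- the $J\cdot A$ and $A\otimes A'$ cases, and all the pre-arena checks --- is routine by comparison.
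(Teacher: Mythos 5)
The paper states this lemma without proof --- Sec.~\ref{sec:tec} proves only the categorical facts about composition, identities and the tensor --- so there is no argument of the authors' to compare yours against; you are supplying a check they treated as routine. Your outline is correct, and it isolates the genuine difficulty: well-foundedness is only at risk in the arrow arena, where the new $E'\!\to\!E$ enabling edges and the mixed instances of Def.~\ref{def:arena}.\ref{i:fj} (final answers of $A$ preceding final answers of $A'$) link the two components at possibly coincident times, and the causality hypothesis $t_M(A)\subseteq t_m(A')$ is exactly what makes both families of cross-edges non-decreasing in $\tau$. Your two auxiliary facts are established without circularity as you describe, and the ``timings eventually constant, hence tail eventually confined to one summand'' scheme goes through. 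One claim is stated too strongly, however: it is \emph{not} true that nothing $\prec$-precedes a final answer of $A'$ at the same time level --- the mixed fork--join edge $a'\prec a$ with $a'\in\overline{E}$, $a\in\overline{E'}$ is precisely such a same-time predecessor. What actually closes the confinement argument is a direction analysis of the two cross-edge families: a constant-time descending chain can pass from the $A'$-component into the $A$-component only by landing in $\overline{E}$, and can pass back only by landing in $E'$, which (as you correctly observe) has no same-time predecessor; so the tail crosses at most twice before being trapped in a single component, where it contradicts well-foundedness of $\prec_A$ or $\prec_{A'}$. With that repair the proof is complete; the remaining pre-arena checks are, as you say, by inspection.
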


\begin{definition}[Play]\label{def:play}
A play $P$ on an arena $A$ is a sequence of \emph{distinct} moves of $A$ such that 
\begin{enumerate}
\item\label{it:jus} for any $m\in P$, $m$ is initial or there is a unique $n\in P$ such that $n\vdash m$;
\item\label{it:qa}  for any $q\in P$, there exists a unique $a\in P$ such that $q\vdash a$;
\item\label{it:prec}  for any $m\prec_A m' \in P$, $m$ occurs before $m'$ in $P$;
\item\label{it:sym}  
for any  $m\in M$ there is a unique $n\in P$ such that $m\asymp n$. 
\end{enumerate}
The set of all legal plays of arena $A$ is $\mathcal L(A)$. 
\end{definition}
Some of the rules are common in game semantics, such as the existence of unique enablers (Def.~\ref{def:play}.\ref{it:jus}) and unique answers (Def.~\ref{def:play}.\ref{it:qa}). Clearly, temporal precedence must be consistent with move sequencing in the play (Def.~\ref{def:play}.\ref{it:prec}). The last condition (Def.~\ref{def:play}.\ref{it:sym}) requires that exactly one of a set of alternative moves occurs in a play. 

Also note that plays must consist of distinct moves in the arenas: the enabling relation is a directed acyclic graph and a play is a path in this DAG. This is why timing is associated with arenas rather than with plays.  

A \emph{position} of an arena $A$ is a prefix of some play $P$ in $A$. A move $m$ is \emph{legal} in a position $P$ if $P::m$ is a position. A \emph{position} of a set of plays $\sigma$ is a prefix of some play $P\in\sigma$ in that set. 
\begin{definition}[Strategy]\label{def:strat}
A \emph{strategy} $\sigma$ on arena $A$, written $\sigma:A$, is a set of plays on $A$ such that 
\begin{enumerate}
\item (responsive) for any position $Q$ in $\sigma$ and legal $\mv O$-move $o$ for $Q$ in $A$, $Q::o$ is a position in $\sigma$;
\item (saturated) for any play $P::m::m'::P'\in\sigma$ if $m$ is a $\mv P$-move or $m'$ is a $\mv O$-move (or both) and $P::m'::m::P'$ is a play then it is also in $\sigma$.
\end{enumerate}
\end{definition}
These conditions correspond to O-completeness and saturations, as used in the ICA game model. 

A move that (transitively) enables two moves is said to be a \emph{common enabler}. A common enabler that enables no other common enablers is said to be their \emph{last common enabler}. 
\begin{definition}\label{def:precs}
A \emph{precedence relation for strategy} $\sigma:A$ is a transitive well-founded relation ${\prec_\sigma}\subseteq M_A\times M_A$ such that :
\begin{enumerate}
\item for any $m,m'\in M_A$, if $m\prec_A m'$ then $m\prec_\sigma m'$;
\item if $m\asymp m' \prec_\sigma n$ then $m\prec_\sigma n$;
\item if $m\prec_\sigma n'\asymp n$ then $m\prec_\sigma n$;
\item for any position $Q$ of $\sigma$ and $m,m'\in M_A$, if $Q::m::m'\in \sigma, Q::m'::m\not\in\sigma$ then $m\prec_\sigma m'$;
\item for any $m,m'\in M_A$ such that $m\not\prec_A m'$ and the last common enabler of $m$ and $m'$ is not an $\mv O$-move then $m\not\prec_\sigma m'$.
\end{enumerate}
\end{definition}
A strategy $\sigma$ is \emph{deadlock-free} if it has a precedence relation $\prec_\sigma$. From now on we only consider deadlock-free strategies. Deadlock-free strategies are interesting in their own right, and represent an alternative to conventional notions of termination (such as may, must or may-must) when timing is known. On the one hand termination analysis is simplified since the type (the arena) contains all the timing information. But on the other hand composition becomes more delicate as there is no room for the two strategies to wait for each other to perform certain common actions. Their synchronization needs to be on the nose, and for it to work at all it is essential that we rule out ``causal loops'' which may take an arbitrary amount of time to sort themselves out.  

For this reason we use precedence also on strategies. It is an order consistent with arena precedence ($\prec_A$) and preserved by alternative moves ($\asymp$), which prevents deadlock from two players waiting for each other.  The interesting requirement is the last one, which gives a term control over when it evaluates its arguments, which always have $\mv O$-moves as their last common enabler, but \emph{not} over when arguments it applies to functions are evaluated, which always have $\mv P$-moves as last common enablers. 

Let $P\keep A$ be a play from which all moves not in $A$ have been removed. Let $\sigma\keep A=\{P\keep A, P\in \sigma\}$. Let $\Sigma_{A,B,C}$ be the set of sequences over $M_A, M_B, M_C$. The interaction and composition of  strategies $\sigma:A\multimap B, \tau:B \multimap C$ are:
\begin{align*}
\sigma\,||\,\tau&=\{ P\in\Sigma_{A,B,C} \mid  P \keep A\multimap B\in\sigma\text{ and } P\keep B\multimap C\in \tau \}.
\\
\sigma;\tau &= \{ P\keep A\multimap C\mid P\in \sigma\,||\,\tau\}.
\end{align*}
Let the interleaving of two strategies $\sigma:A\multimap B, \tau:C\multimap D$ be the set
\[
\sigma\otimes\tau=\{ P\in\mathcal L(A\otimes C\multimap B\otimes D) \mid \\ P \keep A\multimap B\in\sigma\text{ and } P\keep C\multimap D\in \tau \}.
\]


\newcommand{\copycat}{c\!c}
\begin{definition}[Copycat]\label{def:cc}
  We define the \emph{copycat} 
  ${\copycat}_A:A\multimap A$, as the set of all
  plays $P$ such that for all $m\in M_A$,
  $\mathit{inl}(m)\in P$ if and only if $\mathit{inr}(m)\in P$. Moreover, 
  $\mathit{inr}(m)$ occurs before $\mathit{inl}(m)$ in $P$ if and only if $m$ is an $\mv O$-move in~$A$.
\end{definition}

\begin{theorem}\label{thm:games}
There exists a symmetric monoidal closed category with arenas $A$ as objects and strategies $\sigma: A\multimap B$ as morphisms where
\begin{enumerate}
\item identity $id_A:A \multimap A$ is the copycat strategy on $A$;
\item the tensor product is the disjoint union of arenas and interleaving of strategies, respectively;
\item the unit object is the empty arena (no moves);
\item the natural isomorphisms (associator, unitors, commutator) are (the obvious) copycat strategies;
\item currying is relabeling of moves in arenas induced by the obvious isomorphism between $A\otimes B\multimap C$ and $A\multimap B\multimap C$;
\item the morphism $eval_{A,B}:(A\multimap B)\otimes A\rightarrow B$ consists of two copycat behaviours between the $A$ and $B$ components, respectively.
\end{enumerate}
\end{theorem}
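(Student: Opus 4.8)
The plan is to verify the axioms in the standard order---category, symmetric monoidal, closed---treating the timing function $\tau$, the alternation relation $\asymp$ and the precedence relations as bookkeeping layered on top of the ICA game model, so that most of the work (naturality of the structural isomorphisms, the pentagon/triangle/hexagon, the identity laws, the triangle identities for the hom/tensor adjunction) collapses to the familiar ``copycat pushing'': all structural morphisms are copycats, and a copycat composed with an arbitrary strategy is just a relabeling of that strategy. The genuinely new things to establish are (i) that the composite of two deadlock-free strategies is again a strategy and is deadlock-free, and (ii) that composition is associative; alongside these one must check that every construction respects the side-conditions that keep the relevant arenas well-formed---well-foundedness of $\prec_A$, and the causality condition $t_M(A)\subseteq t_m(A')$ on $A\multimap A'$. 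The causality condition propagates through composites for free: from $\sigma:A\multimap B$ and $\tau:B\multimap C$ one has $t_M(A)\subseteq t_m(B)\subseteq t_M(B)\subseteq t_m(C)$, so $A\multimap C$ is again a legal arena. The one delicate instance is that every object must carry a copycat $\copycat_A:A\multimap A$, which needs $A\multimap A$ itself to be an arena; I would check this for the arenas arising as type denotations rather than claim it for all arenas.

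For composition, fix deadlock-free $\sigma:A\multimap B$ and $\tau:B\multimap C$ with precedence relations $\prec_\sigma,\prec_\tau$. Responsiveness and saturation of $\sigma;\tau$ lift along the interaction sequences $\sigma\,||\,\tau$ by the usual projection argument: a legal $\mv O$-extension or a legal $\mv P/\mv O$-transposition in $A\multimap C$ can be replayed in the interaction without disturbing the $B$-component, and the hidden play is legal in $A\multimap C$. For deadlock-freeness I would take $\prec_{\sigma;\tau}$ to be the transitive, $\asymp$-saturated closure, restricted to $M_A\uplus M_C$, of ``$m$ precedes $n$ when some interaction witness realises a $(\prec_\sigma\cup\prec_\tau)$-chain from $m$ to $n$'', and verify Def.~\ref{def:precs}. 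The first four clauses are inherited from the corresponding clauses on $\sigma$ and $\tau$; the last clause together with well-foundedness is the crux. An infinite $\prec_{\sigma;\tau}$-descent would have to re-enter the shared arena $B$ infinitely often, yielding an alternating ``causal loop'' of $\prec_\sigma$- and $\prec_\tau$-segments through $B$; the last clause of Def.~\ref{def:precs} blocks this, because at each re-entry the last common enabler of the two $B$-moves concerned is a $\mv P$-move for exactly one of $\sigma,\tau$ (it is an $\mv O$-move for the other, the polarity of $B$ being swapped between domain and codomain), so that strategy's $B$-move is not free to wait. Combining this with well-foundedness of $\prec_\sigma,\prec_\tau,\prec_A,\prec_B,\prec_C$ yields a well-founded $\prec_{\sigma;\tau}$. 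I expect this step---deadlock-freeness of composition---to be the main obstacle.

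Associativity is then the usual zipping argument: for $\sigma:A\multimap B$, $\tau:B\multimap C$, $\upsilon:C\multimap D$, interactions over $A,B,C,D$ project consistently, so $(\sigma;\tau);\upsilon$ and $\sigma;(\tau;\upsilon)$ coincide as sets of plays (the precedence data plays no role here, only in deadlock-freeness, already handled). The identity laws $\copycat_A;\sigma=\sigma=\sigma;\copycat_B$ follow from a direct computation of the interaction with a copycat, which merely duplicates boundary moves with a fixed relative order so that hiding recovers $\sigma$ on the nose; that $\copycat_A$ is itself a deadlock-free strategy is a direct check against Def.~\ref{def:strat} and Def.~\ref{def:precs}, with precedence inherited from $\prec_{A\multimap A}$.

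Finally the monoidal and closed structure. The unit $I$ is the empty arena; $\otimes$ is disjoint union on objects and interleaving on morphisms, with bifunctoriality $(\sigma\otimes\tau);(\sigma'\otimes\tau')=(\sigma;\sigma')\otimes(\tau;\tau')$ and $id\otimes id=id$ holding componentwise because interaction commutes with disjoint union of arenas. The associator, unitors and symmetry are the evident copycats between underlying arenas that are literally equal up to relabeling, and their naturality and the coherence equations are copycat computations. For closure, $A\multimap B$ is the internal hom; currying is the bijection on hom-sets $\mathcal C(A\otimes B,\,C)\cong\mathcal C(A,\,B\multimap C)$ induced by the arena isomorphism $(A\otimes B)\multimap C\cong A\multimap(B\multimap C)$---whose two sides carry the same timing conditions, as one checks by computing $t_M$ and $t_m$ of the relevant tensors and arrows---it is natural by inspection, and $eval_{A,B}$, the pair of copycats joining the two $A$-components and the two $B$-components of $((A\multimap B)\otimes A)\multimap B$, satisfies the triangle identities, again by copycat pushing. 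Although not asked for in the statement, $I$ is moreover terminal: $A\multimap I$ has no initial moves, so the empty play is its only play---the extra condition that Sec.~\ref{sec:cf} imposes.
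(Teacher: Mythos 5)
Your proposal is correct and follows essentially the same route as the paper: the crux in both is deadlock-freedom of composition, established by the same observation that two $B$-moves whose last common enabler is an $\mv O$-move for $\sigma$ have it as a $\mv P$-move for $\tau$ (and vice versa), so no causal loop through $B$ can arise, after which responsiveness/saturation lift by projection, associativity follows from the three-way interaction, and the monoidal and closed structure reduce to copycat relabelings. Your additional remarks on the propagation of the causality condition $t_M(A)\subseteq t_m(C)$ through composites and on the well-formedness of $A\multimap A$ address points the paper leaves implicit.
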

\subsection{Interpretation of PIA}
Let $\mathcal R$ be the discrete category with objects elements in $\mathbb N[\aff]$, $\catplus$ the additive operator of the semigroup semiring (Eq.~\ref{eq:add}), $\cattimes$ the multiplicative operator of the semigroup semiring (Eq.~\ref{eq:mul}), and the unit $0$ the additive unit (Eq.~\ref{eq:unit}). The associativity, distributivity and zero laws  follow from the semiring properties. 

Let $\mathcal G$ be the category of games from Thm.~\ref{thm:games}. The functor $\cdot$ is given in Def.~\ref{def:schedact} and it satisfies the associativity law given in Eqn.~\ref{eq:sro}. Any arena $A$ and schedules $J,K$ induce obviously isomorphic arenas
 $J\cdot A\,\otimes\, K\cdot A\simeq (J\,\catplus\, K)\cdot A$. The strategy $\delta_{J,K,A}$ is the one induced by the arena isomorphism and it satisfies all required coherence conditions. 

The arena of expressions (base type) is given by $\sbr{\lexpt}=(M,\tau,E,\lambda,\vdash,\asymp)$, where
\begin{align*}
M& =  \{\overline q, \overline a, q\}\cup \mathbb N, \\
\tau& =  \{\overline q\mapsto 0, \overline a\mapsto 1, q\mapsto 0\} \cup \mathbb N\times \{1\},\\
E &= \{\overline q, q\},\\
\lambda& =  \{\overline q\mapsto \mv{OQN}, \overline a\mapsto\mv{PAN}, q\mapsto\mv{OQM}\}\cup \mathbb N\times\{\mv{PAM}\},\\
{\vdash} &=  \{\overline q\vdash\overline a\}\cup \{q\}\times\mathbb N,\\
{\asymp} &= \{\overline q\asymp q\}\cup \{\overline a\}\times\mathbb N.
\end{align*}
In the concrete arena for expressions the initial question $q$ (or its alternative dummy $\overline q$) happen at 0 and the answer $i$ (or the alternative dummy $\overline a$) happen at 1. Note that we demand that an actual question receives an actual answer, not a dummy. 

The scheduling of commands $\lcompc_{x,y}:[x]\cdot\lcomt \multimap[y]\cdot\lcomt\multimap\lcomt$ is interpreted by the strategy consisting of the unique play in arena $\sbr{[x]\cdot\lcomt\multimap[y]\cdot\lcomt\multimap\lcomt}$ in which P does not play dummy moves unless responding to dummy O moves.

Operators $\lopc_{x,y}:[x]\cdot\lexpt \multimap[y]\cdot\lexpt\multimap\lexpt$ are interpreted by a strategy which is a set of plays, all with the same schedule as determined by their arena $\sbr{[x]\cdot\lexpt \multimap[y]\cdot\lexpt\multimap\lexpt}$, in which the final P-answer is calculated as the corresponding arithmetical operation applied to the preceding O-answers. 

Branching $\lifc_{\theta,x,y}:[x]\cdot\lexpt\multimap[y]\cdot\sigma\multimap[y]\cdot\sigma\multimap \sigma$, with $x\leq y$, is a strategy in arena
\[
\sbr{[x]\cdot\lexpt\multimap[y]\cdot\sigma\multimap[y]\cdot\sigma\multimap \sigma}
\]
defined as follows. The schedule constraint $x\leq y$ ensures that O answers first in the $\lexpt$ component, the guard. If it is not zero then a question is asked in the first $\sigma$ component and a dummy question in the second; the proper O-answer is then replicated as the final P-answer, while the dummy O-answer is ignored. Alternatively, if it is 0 then the question is asked in the second component and the dummy question in the first component, with the answer copied as final P-answer and the dummy answer ignored. 

The local variable binder $\lnewc_{\sigma,J,K}$ is interpreted in arena $\sbr{(J\cdot \lexpt\multimap K\cdot\lacct\multimap\sigma)\multimap\sigma}$ in the same way as the local-variable strategy is interpreted in IA, in a history-sensitive way: whenever P answers in the $\lexpt$ arena it is either with the same answer as the last answer in the $\lexpt$ arena or with the last O-answer in the $\lacct$ arena, whichever is most recent.  

\subsection{Technical proofs}\label{sec:tec}

In this section, we show some of the main intermediate results and proofs demonstrating that the concrete category of games $\mathcal G$ is well defined and satisfies the required properties. 

This ancillary lemma is useful for proving further results about strategies:
\begin{lemma}\label{lem:stratprefix}
  For any (deadlock-free) strategies $\sigma$ on $A\multimap B$ and $\tau$ on $B\multimap C$, and any positions $Q$ of $\sigma$, $R$ of $\tau$ such that $Q\keep B=R\keep B$,  then there is a sequence $P\in\sigma||\tau$ such that $Q$ is a prefix of $P\keep A\multimap B$ and $R$ is a prefix of $P\keep B\multimap C$.
\end{lemma}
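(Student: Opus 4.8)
The plan is to build the desired interaction sequence $P$ incrementally, extending simultaneously along both $\sigma$ and $\tau$ while keeping the $B$-projections in agreement. I would set up an induction on the combined length $|Q| + |R|$ (or, more precisely, on the number of moves of $Q$ and $R$ not yet "consumed" into a common $P$). The base case $Q = R = \varepsilon$ is immediate with $P = \varepsilon$. For the inductive step, the key observation is that because $Q\keep B = R\keep B$, the moves of $Q$ in the $A$-component and the moves of $R$ in the $C$-component are "private" to their respective strategies, so they can be appended to $P$ freely in the right order; the only delicate moves are those landing in the shared arena $B$. First I would argue that at least one of $Q$, $R$ still has an unconsumed move (otherwise we are done), and then do a case analysis: if the next unconsumed move of $Q$ lies in $A$ (not $B$), append it to $P$; symmetrically for $C$-moves of $R$; if both next unconsumed moves lie in $B$, they must be equal since $Q\keep B = R\keep B$, and we append that single $B$-move to $P$, consuming it from both sides at once.

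The main subtlety — and the step I expect to be the real obstacle — is \emph{ordering and legality}: I must show that the $P$ I build is actually a legal interaction sequence in $\Sigma_{A,B,C}$, meaning its projections to $A\multimap B$ and $B\multimap C$ are genuine positions of $\sigma$ and $\tau$ respectively (which follows by construction, since the projections are exactly prefixes of $Q$ and $R$), but \emph{also} that the order in which I interleave the private $A$- and $C$-moves around the shared $B$-moves is consistent with the precedence relations $\prec_A$, $\prec_B$, $\prec_C$ on the arenas. Here I would invoke the fact that $\sigma$ and $\tau$ are deadlock-free, i.e.\ each carries a precedence relation $\prec_\sigma$, $\prec_\tau$ (Def.~\ref{def:precs}); combining these with the saturation condition (Def.~\ref{def:strat}) lets me reorder the private moves so that no precedence constraint is violated — concretely, whenever a $B$-move in $Q$ and a $B$-move in $R$ must be identified, the surrounding $A$- and $C$-moves can be shuffled past each other because an $A$-move and a $C$-move are never related by any arena precedence (their arenas are disjoint and they share no enabler), so saturation of the \emph{combined} sequence under $A/C$ transpositions is automatic.

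Concretely I would maintain the invariant that, at each stage, the sequence $P$ built so far satisfies $P\keep A\multimap B \preceq Q$ (prefix) and $P\keep B\multimap C \preceq R$, with the two prefixes agreeing on exactly the $B$-moves already placed. The induction terminates because each step strictly decreases the number of unconsumed moves. When no unconsumed moves remain, $P\keep A\multimap B = Q$ and $P\keep B\multimap C = R$ and we even get equality rather than just the prefix statement claimed; the weaker statement in the lemma is what is needed downstream. A final bureaucratic point is to check that the $\asymp$-alternative condition (Def.~\ref{def:play}.\ref{it:sym}) for sequences is handled correctly when restricting to components — but since we only ever copy moves that already occur in $Q$ or $R$, and $Q$, $R$ are themselves positions, no new alternative-related move is introduced, so this is inherited.
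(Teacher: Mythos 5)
There is a genuine gap, and it is in the reading of what must be proved rather than in the combinatorics of interleaving. You construct $P$ by merging $Q$ and $R$ themselves, so that at the end $P\keep A\multimap B = Q$ and $P\keep B\multimap C = R$; you even remark that this gives ``equality rather than just the prefix statement claimed.'' But membership $P\in\sigma||\tau$ requires $P\keep A\multimap B\in\sigma$ and $P\keep B\multimap C\in\tau$, i.e.\ the projections must be complete \emph{plays} (elements of the strategies), not merely positions. Plays in this model are maximal in a strong sense (every question answered, exactly one representative of each $\asymp$-class present), whereas $Q$ and $R$ are only prefixes of such plays. So the $P$ you build is in general \emph{not} in $\sigma||\tau$, and the ``prefix'' phrasing of the lemma is not a harmless weakening: it is the whole point. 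The real content of the lemma is that the two positions can be \emph{extended} to full plays of $\sigma$ and $\tau$ whose $B$-projections continue to agree, and your plan never addresses extension at all.

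This is exactly where the paper's proof does its work: it first shows that the transitive closure $\prec'$ of $\prec_\sigma\cup\prec_\tau$ is well-founded (using the last-common-enabler clause of deadlock-freedom to rule out $m\prec_\sigma m'$ and $m'\prec_\tau m$), then takes a hypothetical counterexample $(Q,R)$ of maximal total length, picks a $\prec'$-least pending move $m$, and shows $m$ can be appended to $Q$ or $R$ (rearranging a play of $\tau$ by saturation/precedence so that $m$ occurs immediately after $R$, and using responsiveness of $\sigma$ when $m$ is a $B$-move) to produce a longer counterexample --- a contradiction. Without some such argument your induction has nothing to say once $Q$ and $R$ are exhausted. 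A secondary problem: your claim that private $A$- and $C$-moves can be shuffled freely because ``an $A$-move and a $C$-move are never related by any arena precedence'' is false in this timed setting, since Def.~\ref{def:arena} makes $\tau(m)<\tau(n)$ imply $m\prec n$ even across disjoint components; cross-component orderings are precisely what the well-foundedness of $\prec'$ is needed to control.
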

\begin{proof}
  Let $\prec'$ be the minimum transitive relation on moves of $A$, $B$, and $C$ such that $m\prec'm'$ if $m\prec_\sigma m'$ or $m\prec_\tau m'$. Assume for contradiction that there are two moves $m$, $m'$ such that $m\prec_\sigma m'$ and $m'\prec_\tau m$; both moves would obviously have to be moves of $B$. Because $m'\prec_\tau m$, $m\nprec_B m'$. Because $m\prec_\sigma m'$, $m'\nprec_B m$. If $m$ and $m'$ do not have an $\mathsf{O}$-move as their last common enabler in $A\multimap B$, then because $m\nprec_{A\multimap B}m'$, we have $m\nprec_\sigma m'$, a contradiction. If they do have an $\mathsf{O}$-move as their last common enabler in $A\multimap B$, they must have the same move as their last common enabler in $B\multimap C$, where it is a $\mathsf{P}$-move, and thus because $m'\nprec_{B\multimap C}m$, $m'\nprec_\tau m$, also a contradiction. Thus, the assumption is wrong; and so, ${\prec'}$ is well-founded.

  Assume for contradiction that for given arenas $A$, $B$, $C$, that $Q$, $R$ are the counterexamples to the lemma with the largest total length. (They must be finitely long because the arenas contain finitely many questions, and all answers in a play must have a question enabling them.) Define $M$ as the set of all moves legal in the respective arenas in $Q$ or in $R$, or (transitively) justified by such a move. Choose a ${\prec'}$-least move $m\in M$ (such a move must be legal in $Q$ or $R$, because otherwise, its enabler would be ${\prec'}$-less than it, and there must be such a move or else $Q$ and $R$ are plays whose common moves appear in the same order and thus finding a suitable $P$ is trivial). Without loss of generality, assume that either $m$ is a move of $C$, or an $\mathsf{O}$-move of $B$ (the proof in the other cases is the same with $\sigma$ and $\tau$, and $Q$ and $R$, exchanged). Let $P_\tau$ be a play of $\tau$ with $R$ as a prefix in which $m$ or an alternative to it appears as early as possible. (Without loss of generality, assume that it is $m$ that appears.) If $R::m$ is not a prefix of $P_\tau$, then there must be a move $m'\in M$ immediately before $m$ in $P_\tau$; then $m'\nprec'm$ (by the definition of m), so $m'\nprec_\tau m$ (by the definition of ${\prec'}$), so $P_\tau$ with $m$ and $m'$ exchanged is a play of $\tau$ (by the definition of ${\prec_\tau}$), contradicting the assumption that $P_\tau$ is chosen such that $m$ appears as early as possible. Thus, $R::m$ is a prefix of $P_\tau$. If $m$ is a move of $C$, then we have $Q$ and $R::m$ as a counterexample to the lemma, violating the assumption that $Q$ and $R$ formed the counterexample with the largest total length. If $m$ is a move of $B$, then similarly we have $Q::m$ and $R::m$ as a counterexample to the lemma ($Q::m$ is a prefix of a play in $\sigma$ because $m$ is an $\mathsf{O}$-move of $A\multimap B$ and $\sigma$ is responsive), again violating the same assumption.

  Therefore, there cannot be a longest counterexample, and thus there cannot be any counterexample, to the lemma.
\end{proof}

To prove $\mathcal G$ a category, we need to show that it is closed under composition, that composition is associative, and that it has identities.

\begin{theorem}[Closure under composition]
  The composition $\sigma;\tau$ of two (deadlock-free) strategies $\sigma$ on $A\multimap B$, $\tau$ on $B\multimap C$ is a (deadlock-free) strategy on $A\multimap C$.
\end{theorem}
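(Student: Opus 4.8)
The plan is to check, in order, the three clauses of Def.~\ref{def:strat} for $\sigma;\tau$ viewed as a set of plays on the arena $A\multimap C$ --- that each of its elements is a legal play, that it is responsive, and that it is saturated --- and then to exhibit a precedence relation for it, which is exactly deadlock-freedom. Two preliminaries organise everything. First, $A\multimap C$ is a legitimate arrow arena: the causality hypotheses give $t_M(A)\subseteq t_m(B)\subseteq t_M(B)\subseteq t_m(C)$, hence $t_M(A)\subseteq t_m(C)$. Second, I will repeatedly use Lemma~\ref{lem:stratprefix} to pass between positions of $\sigma;\tau$ and the interactions $P\in\sigma\,||\,\tau$ that witness them, together with the relation $\prec'$, the minimum transitive relation on $M_A\uplus M_B\uplus M_C$ with $m\prec' m'$ whenever $m\prec_\sigma m'$ or $m\prec_\tau m'$, whose well-foundedness is already established in the first half of the proof of Lemma~\ref{lem:stratprefix}.

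For legality, given $P\in\sigma\,||\,\tau$ I would verify Def.~\ref{def:play} for $P\keep A\multimap C$: distinctness of moves is inherited from $P\keep A\multimap B\in\sigma$ and $P\keep B\multimap C\in\tau$; justification pointers are recovered by the usual re-routing through the hidden initial moves of $B$, which respects the $\mathsf N$/$\asymp$ discipline because hiding only deletes $B$-moves and leaves the enabling relation on the $A$- and $C$-parts unchanged; each $A$- or $C$-question keeps exactly its answer, so completeness transfers; the $\asymp$-clause follows from $\asymp_{A\multimap C}={\asymp_A}\uplus{\asymp_C}$ and the corresponding clauses for the two restrictions; and the precedence clause holds because $P$ is a single sequence respecting $\prec'$ (every link of a $\prec_\sigma$- or $\prec_\tau$-chain is respected by the relevant restriction of $P$), and the visible order is therefore consistent with $\prec'$ and with $\prec_{A\multimap C}$, which (as noted below) agrees with $\prec'$ at any fixed time. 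For responsiveness I would use Lemma~\ref{lem:stratprefix} to realise a position of $\sigma;\tau$ as the (truncated) $A\multimap C$-restriction of an interaction; a legal $\mathsf O$-move of $A\multimap C$ is a legal $\mathsf O$-move of exactly one of $\sigma,\tau$ and is not a $B$-move, so it is appended on that side using responsiveness of that component while leaving the other side's $B$-view untouched, and one re-restricts. Saturation is analogous: an admissible adjacent $\mathsf P/\mathsf O$ transposition in $A\multimap C$ lifts to at most two adjacent transpositions in the interaction, each absorbed by the saturation of $\sigma$ or of $\tau$.

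For deadlock-freedom I would take $\prec_{\sigma;\tau}$ to be the transitive closure of ${\prec_{A\multimap C}}\cup{\prec_\sigma}\cup{\prec_\tau}$, restricted to $M_A\uplus M_C$. Clause~1 of Def.~\ref{def:precs} then holds by construction; clauses~2--3 follow from $\asymp_{A\multimap C}={\asymp_A}\uplus{\asymp_C}$ together with the $\asymp$-closure of $\prec_\sigma$, $\prec_\tau$ and $\prec_{A\multimap C}$; clause~4 is obtained by realising the witnessing position via Lemma~\ref{lem:stratprefix} and tracing the forced order of the two moves back either to clause~4 for $\sigma$ or $\tau$, or to the $B$-moves they enclose in the interaction, which supply the needed $\prec'$-chain.

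The steps I expect to be the main obstacle are the well-foundedness of $\prec_{\sigma;\tau}$ and clause~5 of Def.~\ref{def:precs}. For well-foundedness: each arena precedence involved is consistent with timing, so any cycle would occur among moves of a single time $t$; at a fixed time the timing-induced edges of $\prec_{A\multimap C}$ vanish, and its remaining generators --- inherited enabling, $\asymp$-closure, and the Fork/Join clause of Def.~\ref{def:arena} --- all factor through $\prec_\sigma$ and $\prec_\tau$ (in particular the re-routed enabling $e_C\vdash e_A$ between initial moves and the Fork/Join edge ``a final answer of $A$ precedes a final answer of $C$'' both decompose via a $B$-move), so the cycle would already be a cycle in $\prec'$, contradicting its well-foundedness. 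For clause~5: if $m$ and $m'$ lie in different components their last common enabler in $A\multimap C$ is always an initial move of $C$, an $\mathsf O$-move, so the hypothesis is vacuous; if they lie in the same component the last common enabler coincides with the one computed in $A\multimap B$ (resp. $B\multimap C$) with the same polarity, so the condition reduces to clause~5 for $\sigma$ (resp. $\tau$) --- the delicate point being to rule out that a ``detour'' $\prec_{\sigma;\tau}$-path through the hidden arena re-introduces a forbidden ordering, which I expect to follow from the clause-5 properties of $\prec_\sigma,\prec_\tau$ and the same factorisation-through-$B$ analysis used for well-foundedness. The remaining ingredients --- the interaction lemma and the well-foundedness of $\prec'$ --- are already in hand.
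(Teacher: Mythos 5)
Your proposal is correct and follows essentially the same route as the paper: verify the play conditions for the hidden restriction, use Lemma~\ref{lem:stratprefix} for responsiveness, handle saturation and clause~4 of Def.~\ref{def:precs} by lifting adjacent transpositions to the interaction, and build $\prec_{\sigma;\tau}$ from $\prec_\sigma$ and $\prec_\tau$ (the paper uses $\prec'$ directly, so your extra well-foundedness work for the added $\prec_{A\multimap C}$ generators is the mirror image of the paper's clause-1 obligation). Your treatment is in fact slightly more careful than the paper's in checking that $A\multimap C$ is a well-formed arrow arena and in addressing the same-component case of clause~5, which the paper passes over.
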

\begin{proof}
  We show that $\sigma;\tau$ is a set of plays on $A\multimap C$, and that it is responsive, saturated, and deadlock-free.

  $\sigma;\tau$ is by definition a set of sequences of moves of $A\multimap C$. For each sequence:
  \begin{itemize}
    \item All moves must be distinct, because two identical moves from $A$ would imply there were two identical moves in $\sigma$, and likewise for $C$ and $\tau$. Similar arguments shows that there is one move from each set of alternatives, and that each question enables exactly one answer.
    \item All moves must be either initial, or enabled by an earlier move in the sequence: 
      \begin{itemize}
      \item By construction of the arenas, moves of $A$ cannot enable moves of $B$ or C in the original arenas $A\multimap B$, $B\multimap C$, nor can moves of $C$ enable moves of $A$, nor moves of $B$ enable moves of $C$.
      \item Moves of $A$ enabled by other moves of $A$, and moves of $C$ enabled by other moves of $C$, in the initial arenas, will have both the move and enabler included in the same order in $\sigma;\tau$.
      \item Initial moves of $C$ are initial moves of both $B\multimap C$ and $A\multimap C$ and so cannot be enabled in either arena.
      \item Initial moves of $A$ (the only remaining case) are enabled by each initial move of $C$, and so are enabled in $A\multimap C$ by whichever initial move of $C$ happens to be included in the relevant play of $\tau$.
      \end{itemize}
    \item The order of moves in $\sigma;\tau$ must be consistent with ${\prec}_{A\multimap C}$; the timings must be in non-decreasing order, because otherwise either the order would be inconsistent with $\prec_{A\multimap B}$ or $\prec_{B\multimap C}$ respectively in a play of $\sigma$ or $\tau$, or else $B$ is the null arena (and thus no play of $\sigma$ contains any moves); and no moves can answer a move that is enabled by a move answered earlier in a sequence, using a similar argument to the above.
  \end{itemize}
  Therefore, $\sigma;\tau$ is a set of plays on $A\multimap C$.

  To see that $\sigma;\tau$ is responsive, consider a position $Q$ of $\sigma;\tau$, and an $\mathsf{O}$-move $m$ legal in $Q$. Without loss of generality, assume that $m$ is a move of $A$ (the proof for $m$ a move of $C$ is similar). Let $P$ be an element of $\sigma||\tau$ such that $Q$ is a prefix of $P\keep A\multimap C$, $P'$ the shortest prefix of $P$ where $Q = P'\keep A\multimap C$, $Q_\sigma$ be $P'\keep A\multimap B$, $Q_\tau$ be $P'\keep B\multimap C$. Then because $\sigma$ is responsive, $Q_\sigma::m$ is a position of $\sigma$; $Q_\tau$ is a position of $\tau$ by definition; and so by Lem.~\ref{lem:stratprefix}, $Q::m$ is the prefix of some play in $\sigma;\tau$.

  For deadlock-freedom, we need to prove the existence of a $\prec_{\sigma;\tau}$. We claim that $\prec'$ defined in the proof of Lem.~\ref{lem:stratprefix} meets all the requirements to be such a relation. The first three requirements of Def.~\ref{def:precs} are obvious, and the last requirement is trivially met because the set of last common enablers of an $A$-move and $C$-move are the initial questions of $C$ (which contains only $\mathsf{O}$-moves, and is nonempty except in the degenerate case where $\sigma$ has no nonempty plays), so we need only prove that if $m\nprec'm'$ and $Q::m::m'$ is a position of $\sigma;\tau$, then $Q::m'::m$ is also a position of $\sigma;\tau$. Let $P$ be an element of $\sigma||\tau$ such that $Q::m::m'$ is a prefix of $P\keep A\multimap C$; let $P'$ be the shortest prefix of $P$ such that $P'\keep A\multimap C = Q::m::m'$; and let $P''$ be the longest prefix of $P$ such that $P''\keep A\multimap C = Q$. Then let $Q_\sigma = P''\keep A\multimap B$, $Q_\sigma::R_\sigma = P'\keep A\multimap B$, and likewise for $Q_\tau$ and $R_\tau$. Let $N = \{n\in R_\sigma\cup R_\tau | m\prec'n\}\cup\{m\}$, and $N' = \{n'\in R_\sigma\cup R_\tau | n'\not\in N\}$. $Q_\sigma::R_\sigma$ is a position of $\sigma$, and because $\sigma$ is deadlock-free and $n\nprec' n'$ implies $n\nprec_\sigma n'$, it must be possible to repeatedly exchange the positions of a move of $N$ in $R_\sigma$ and an immediately following move of $N'$ in $R_\sigma$ and still have a position of $\sigma$; likewise for $\tau$. The rearrangement is the same in both strategies, and so the rearranged $Q_\sigma::R_\sigma$ and $Q_\tau::R_\tau$ have their common moves in the same order. $m\in N$ by definition, $m'\in N'$ because $m\nprec'm'$ by assumption. And therefore, via Lem.~\ref{lem:stratprefix}, $Q::m'::m$ is a prefix of a play of $\sigma;\tau$.

To prove that $\sigma;\tau$ is saturated, we need to prove that if $m$ is an $\mathsf{P}$-move and/or $m'$ is a
 $\mathsf{O}$-move, and $P::m::m'::P'\in \sigma$, then if $P::m'::m::P'$ is a play $P::m'::m::P'\in \sigma$. The proof is
 along similar lines to the previous proof. Define ${\prec_{\mathit sat}}$ such that $o {\prec_{\mathit sat}} p$ for every
 $\mathsf{O}$-move $o$ and $\mathsf{P}$-move $p$ in $A$, $B$, and $C$. Let $P_{\sigma||\tau}$ be the element of
 $\sigma||\tau$ such that $P_{\sigma||\tau}\keep A\multimap C = P::m::m'::P'$; let $P'_{\sigma||\tau}$ be the longest
 prefix and $P''_{\sigma||\tau}$ the longest suffix of $P_{\sigma||\tau}$, such that $P'_{\sigma||\tau}\keep A\multimap C = P$
 and $P''_{\sigma||\tau}\keep A\multimap C = P'$, and define $R$ such that $P'_{\sigma||\tau}::R::P''_{\sigma||\tau} = P_{\sigma||\tau}$. 
Then let $P'_\sigma = P'_{\sigma||\tau}\keep A\multimap B$, $P''_\sigma = P''_{\sigma||\tau}\keep A\multimap B$,
 $R_\sigma = R\keep A\multimap B$, and likewise for $P'_\tau$, $P''_\tau$, and $R_\tau$. Let $N = \{n\in R_\sigma\cup R_\tau | m\prec_{\mathit sat}n\}\cup\{m\}$, and $N' = \{n'\in R_\sigma\cup R_\tau | n'\not\in N\}$.
 $P'_\sigma::R_\sigma::P''_\sigma$ is a position of $\sigma$, and because $\sigma$ is saturated and $n\nprec_{\sigma||\tau}n'$ implies $n$ is a $\mathsf{P}$-move and/or $n'$ is an $\mathsf{O}$-move, it must be possible to repeatedly
 exchange the positions of a move of $N$ in $R_\sigma$ and an immediately following move of $N'$ in $R_\sigma$ and still
 have a play of $\sigma$; likewise for $\tau$. As such, applying the same rearrangement to $P_{\sigma||\tau}$ leads to 
an interaction which forms a play of $\sigma$ if restricted to moves of $A\multimap B$, and a play of $\tau$ if restricted
 to moves of $B\multimap C$. And thus, applying the same rearrangement to $P::m::m'::P'$ gives a play of $\sigma;\tau$,
 $P::m'::m::P'$.

  Therefore, $\sigma;\tau$ is a strategy on $A\multimap C$.
\end{proof}

\begin{theorem}[Associativity]
  $(\sigma;\tau);\upsilon$ = $\sigma;(\tau;\upsilon)$.
\end{theorem}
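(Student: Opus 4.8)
The plan is the standard game-semantic route to associativity: realize both $(\sigma;\tau);\upsilon$ and $\sigma;(\tau;\upsilon)$ as the image, under a single hiding operation, of one set of four-party interaction sequences. Let $\sigma:A\multimap B$, $\tau:B\multimap C$, $\upsilon:C\multimap D$; write $\Sigma_{A,B,C,D}$ for the finite sequences over $M_A\uplus M_B\uplus M_C\uplus M_D$ and extend $\keep$ to them by the same deletion-of-foreign-moves recipe, so that $w\keep X\multimap Y$ keeps exactly the $X$- and $Y$-moves of $w$. Put
\[
\sigma\,||\,\tau\,||\,\upsilon \;=\; \{\,w\in\Sigma_{A,B,C,D}\mid w\keep A\multimap B\in\sigma,\ w\keep B\multimap C\in\tau,\ w\keep C\multimap D\in\upsilon\,\}.
\]
The theorem is then immediate from the two set identities
\[
(\sigma;\tau);\upsilon \;=\; \{\,w\keep A\multimap D\mid w\in\sigma\,||\,\tau\,||\,\upsilon\,\} \;=\; \sigma;(\tau;\upsilon).
\]

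Consider the first identity; the other is symmetric. The inclusion $\{\,w\keep A\multimap D\,\}\subseteq(\sigma;\tau);\upsilon$ is bookkeeping with $\keep$, using repeatedly that $(x\keep U)\keep V = x\keep V$ whenever $V\subseteq U$: for $w\in\sigma\,||\,\tau\,||\,\upsilon$, the subsequence of $w$ on $M_A\uplus M_B\uplus M_C$ lies in $\sigma\,||\,\tau$ (its $A\multimap B$- and $B\multimap C$-restrictions are $w\keep A\multimap B$ and $w\keep B\multimap C$), hence the subsequence on $M_A\uplus M_C$ lies in $\sigma;\tau$, hence the subsequence on $M_A\uplus M_C\uplus M_D$ lies in $(\sigma;\tau)\,||\,\upsilon$, so $w\keep A\multimap D\in(\sigma;\tau);\upsilon$. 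For the converse, let $P\in(\sigma;\tau);\upsilon$; unfolding the definitions gives $R\in(\sigma;\tau)\,||\,\upsilon$ with $P=R\keep A\multimap D$, and then $S\in\sigma\,||\,\tau$ with $S\keep A\multimap C = R\keep A\multimap C$. It remains to \emph{zip} $S$ (a sequence on $M_A,M_B,M_C$) and $R$ (on $M_A,M_C,M_D$), which agree on the shared $A$- and $C$-moves, into some $w\in\Sigma_{A,B,C,D}$ restricting to $S$ on $M_A\uplus M_B\uplus M_C$ and to $R$ on $M_A\uplus M_C\uplus M_D$. No ``zipping lemma'' with a legality or uniqueness side-condition is needed here, because $||$ places no legality constraint on interaction sequences at all: writing the common subsequence as $T = t_1\cdots t_k$, we have $S = s_0 t_1 s_1\cdots t_k s_k$ with each $s_i\in M_B^\ast$ and $R = r_0 t_1 r_1\cdots t_k r_k$ with each $r_i\in M_D^\ast$ (the $t_i$ are literally the same list, since $S$ and $R$ agree on $A$- and $C$-moves), so $w = s_0 r_0 t_1 s_1 r_1\cdots t_k s_k r_k$ works. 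One checks directly that $w\keep A\multimap B = S\keep A\multimap B\in\sigma$, $w\keep B\multimap C = S\keep B\multimap C\in\tau$, $w\keep C\multimap D = R\keep C\multimap D\in\upsilon$, whence $w\in\sigma\,||\,\tau\,||\,\upsilon$ and $w\keep A\multimap D = R\keep A\multimap D = P$. For $\sigma;(\tau;\upsilon)$ one argues the same way, unfolding $\tau;\upsilon$ instead and zipping a witness on $M_B,M_C,M_D$ with a sequence on $M_A,M_B,M_D$ along their common $B$- and $D$-moves.

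I expect no serious obstacle: the only points warranting care are that every interaction sequence is finite — each of its moves appears in some finite projection lying in a strategy, and those projections have pairwise-distinct moves (as noted in the closure-under-composition proof), so the sequences do too, which legitimizes the block decomposition used in the zip — and the routine verification that $w$ has exactly the claimed projections, which uses only that $M_A,M_B,M_C,M_D$ are pairwise disjoint. In contrast with the preceding theorem, this proof invokes nothing about precedence, timing or deadlock-freedom: those properties are what guarantee that composition again lands in the category of strategies, whereas associativity is purely an identity between the underlying sets of move-sequences. (One should also record, though it is immediate from the definitions of $t_M$ and $t_m$, that the causality side-condition $t_M(-)\subseteq t_m(-)$ on arrow arenas is transitive, so that all the arrow arenas and composites named in the statement are well-defined.)
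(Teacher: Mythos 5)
Your proof is correct and follows essentially the same route as the paper's: both define the three-way interaction $\sigma\,||\,\tau\,||\,\upsilon$ and identify each composite with its restriction to $A\multimap D$. The only difference is that you spell out the zipping argument for the inclusion $(\sigma;\tau)\,||\,\upsilon\subseteq(\sigma\,||\,\tau\,||\,\upsilon)\keep A\cup C\cup D$ (correctly observing that $||$ imposes no legality constraint on interaction sequences, so interleaving the $B$- and $D$-blocks is unconstrained), a step the paper compresses into a one-line remark.
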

\begin{proof}
  Define the three-way interaction $\sigma||\tau||\upsilon$ of
strategies $\sigma:A\multimap B$, $\tau:B\multimap C$, $\upsilon:C
\multimap D$, as $\{P\in\Sigma_{A,B,C,D}|P\keep A\multimap B=\sigma
\wedge P\keep B\multimap C=\sigma\wedge P\keep C\multimap D=\sigma\}$.
$(\sigma;\tau)||\upsilon = (\sigma||\tau||\upsilon)\keep A\cup C\cup D$,
because for each element of $\sigma;\tau$, there is by definition an
element of $\sigma||\tau$ corresponding to it. For the same reason,
$\sigma;(\tau||\upsilon) = (\sigma||\tau||\upsilon)\keep A\cup B\cup D$.
Thus, $(\sigma;\tau);\upsilon = (\sigma||\tau||\upsilon)\keep A\multimap
D = \sigma;(\tau;\upsilon)$.
\end{proof}

\begin{theorem}[Identity]
  The copycat strategy $id_A$ for any arena $A$ is in fact a strategy, and a left and right identity under strategy composition.
\end{theorem}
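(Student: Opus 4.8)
The plan is to discharge the two assertions in turn: that ${\copycat}_A$ satisfies Definitions~\ref{def:strat} and~\ref{def:precs}, and that it is a two-sided unit for strategy composition.

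\emph{${\copycat}_A$ is a strategy.} First I would check that every element of ${\copycat}_A$ is a play on $A\multimap A$ (Def.~\ref{def:play}). Since $\tau\uplus\tau$ merely copies timings, the partnered moves $\mathit{inl}(m)$ and $\mathit{inr}(m)$ carry the same timing and the same question/answer tag, and opposite $\mathsf O/\mathsf P$ tags because the domain copy is relabelled by $\lambda^\bullet$; the distinctness, unique-justifier, unique-answer and one-representative-per-$\asymp$-class conditions then transfer from the facts that $A\multimap A$ is an arena and that the matching is a bijection between the two copies. The one genuine point is consistency with $\prec_{A\multimap A}$: I would argue that the copycat order — inherit the enabling/timing order inside each copy, and put $\mathit{inr}(m)$ before $\mathit{inl}(m)$ exactly when $m$ is $\mathsf O$ in $A$ — is a linear extension of $\prec_{A\multimap A}$, using that $\prec$ on an arrow arena links the two sides only through the initial-move bridge and through the fork/join clause (Def.~\ref{def:arena}.\ref{i:fj}), both of which are respected because partnered moves have identical timing. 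For responsiveness, an $\mathsf O$-move legal in a position of ${\copycat}_A$ is either an $\mathsf O$-move of the codomain copy or the $\lambda^\bullet$-image of a $\mathsf P$-move of the domain copy; in either case its partner is a $\mathsf P$-move placed after it by Def.~\ref{def:cc}, so appending the move already yields a position. For saturation, a swap licensed by Def.~\ref{def:strat} never reverses a partnered pair into the order forbidden by Def.~\ref{def:cc} and otherwise preserves the matching and all pairwise orders, so the result stays in ${\copycat}_A$ whenever it is still a play.

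\emph{Deadlock-freedom.} I would take $\prec_{{\copycat}_A}$ to be the least transitive, $\asymp$-closed relation extending $\prec_{A\multimap A}$ by the pairs $\mathit{inr}(m)\prec\mathit{inl}(m)$ (for $m$ an $\mathsf O$-move of $A$) and $\mathit{inl}(m)\prec\mathit{inr}(m)$ (otherwise). Well-foundedness is clear since this relation is consistent with timing, and clauses (1)--(4) of Def.~\ref{def:precs} are immediate — clause (4) because the only adjacent swaps that leave ${\copycat}_A$ are exactly the reversals of partnered pairs. Clause (5) is the delicate part: one must show that whenever $m\prec_{{\copycat}_A}m'$ but $m\not\prec_{A\multimap A}m'$, the last common enabler of $m$ and $m'$ is an $\mathsf O$-move. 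The key observation is that domain moves enable only domain moves, so any common enabler of a partnered pair must be a codomain move, and in order to reach a domain move it has to be a codomain \emph{initial} move — which is an $\mathsf O$-move of $A\multimap A$. Propagating this from partnered pairs to their transitive and $\asymp$-closures is a short additional case analysis.

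\emph{Identity laws.} For $\sigma:A\multimap B$ I would prove ${\copycat}_A;\sigma=\sigma$ by the usual zig-zag argument adapted to the timed setting. For $\sigma\subseteq{\copycat}_A;\sigma$: given $Q\in\sigma$, relabel its $A$-moves into the ``inner'' copy and, for each, insert its partner in the ``outer'' copy immediately adjacent to it in the order dictated by Def.~\ref{def:cc}; since partners share a timing this insertion respects temporal precedence, so the result is an interaction in ${\copycat}_A\,\|\,\sigma$ whose restriction to the outer $A$ and to $B$ is $Q$. For ${\copycat}_A;\sigma\subseteq\sigma$: in any $P\in{\copycat}_A\,\|\,\sigma$ each outer-copy move is, by the copycat constraint, tied to its inner-copy partner up to intervening moves, so $P\keep(\text{outer }A)\multimap B$ differs from $P\keep(\text{inner }A)\multimap B\in\sigma$ only by relabelling the $A$-copy and by swaps of an $A$-move past a $B$-move of exactly the kind permitted by saturation; hence it lies in $\sigma$. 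The identity $\sigma;{\copycat}_B=\sigma$ is symmetric, and deadlock-freedom of the composites is free from the closure-under-composition theorem already established.

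\emph{Main obstacle.} The most error-prone step is the verification of clause (5) of Def.~\ref{def:precs} for ${\copycat}_A$: copycat consists almost entirely of cross-component ``answer-here-depends-on-question-there'' dependencies, and one must confirm that none of these, after transitive and $\asymp$-closure, induces an ordering $m\prec_{{\copycat}_A}m'$ with $m\not\prec_{A\multimap A}m'$ whose last common enabler is a $\mathsf P$-move; the causality condition $t_M(A)\subseteq t_m(A)$ baked into the definition of $A\multimap A$ is what ultimately makes this go through and keeps ${\copycat}_A$ non-degenerate. A close second is making precise, in the ${\copycat}_A;\sigma\subseteq\sigma$ direction, that the interleaving of an outer-copy move with intervening $B$-moves can always be normalised back, using saturation, to the canonical order that witnesses membership in $\sigma$.
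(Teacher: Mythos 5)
Your proposal is correct and follows essentially the same route as the paper's proof: establish that ${\copycat}_A$ is a set of plays that is responsive and saturated, exhibit an explicit precedence relation (arena precedence plus the ordered partnered pairs) and discharge the last-common-enabler clause by noting that cross-copy pairs are commonly enabled only by initial $\mathsf O$-moves, then prove the unit laws by the two-inclusion zig-zag argument using saturation for ${\copycat}_A;\sigma\subseteq\sigma$ and explicit insertion of partnered moves for $\sigma\subseteq{\copycat}_A;\sigma$. Your treatment is somewhat more detailed than the paper's (notably on clause (5) and on verifying the play conditions), but no step differs in substance.
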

\begin{proof}
  By definition, $id_A$ is a set of plays on $A\multimap A$.

  Because $id_A$ is defined as containing all plays except those where specific $\mathsf{O}$-moves appear after specific $\mathsf{P}$-moves, $id_A$ is trivially both saturated and responsive (the requirement is preserved by moving $\mathsf{O}$-moves earlier or $\mathsf{P}$-moves later, and cannot prevent $\mathsf{O}$-moves appearing unless they have already appeared in the play).

  To show deadlock-freedom, a suitable $\prec_{id_A}$ is the least transitive relation where $in_x(m) \prec_{id_A} in_y(m')$ for $x,y\in{l,r}, m \prec_A m'$, and where $in_x(m) \prec in_y(m)$ with $x\neq y\in{l,r}$ and $in_x(m)$ an $\mathsf{O}$-move. This relation is obviously well-founded, obviously respects precedence on the arena and alternatives, and obviously lists all pairs of moves that cannot be reversed. It also obeys the last common enabler rule, because the last common enablers of $in_l(m)$ and $in_r(m')$ are the initial moves of $id_A$, which are $\mathsf{O}$-moves (except in the degenerate case where $A$ has no initial moves, whose identity contains no moves in its plays and thus is trivially deadlock-free).

  To prove $id_A;\sigma=\sigma$, consider the interaction $id_A||\sigma$. (This contains moves from two distinct copies of $A$; we label them $A_1$ and $A_2$ for clarity, with $id_A$ on $A_1\multimap A_2$ and $\sigma$ on $A_2\multimap B$.) By the definition of $id$, in the interaction $id_A||\sigma$, for each move of $A_2$ there is a move of $A_1$ and vice versa; and the $\mathsf{O}$-moves come first. Thus, for each play of $id_A;\sigma$, there is a play of $\sigma$ that contains the same moves (but not necessarily in the same order). However, the only changes to the ordering of the moves that are made are to move $\mathsf{P}$-moves later and/or $\mathsf{O}$-moves earlier. Thus, $id_A;\sigma\subseteq\sigma$. Additionally, by replacing each $\mathsf{O}$-move $o$ with $in_{A_1}(o)::in_{A_2}(o)$ and each $\mathsf{P}$-move $p$ with $in_{A_2}(p)::in_{A_2}(p)$ in any play of $\sigma$, the resulting sequence is clearly an element of $id_A||\sigma$, and the play derived from it is clearly identical to the original play. Thus, $\sigma\subseteq id_A;\sigma$. And so, $id_A;\sigma=\sigma$. A similar argument can be used to prove that $\sigma;id_A=\sigma$.
\end{proof}

\begin{theorem}
  Taking ${\otimes}$ on strategies to be interleaving of strategies, $(\sigma;\sigma')\otimes(\tau;\tau') = (\sigma\otimes\tau);(\sigma'\otimes\tau')$.
\end{theorem}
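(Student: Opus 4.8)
The plan is to follow the pattern of the associativity proof: introduce a single interaction of all four strategies over the six arenas involved and exhibit both sides of the identity as the same restriction of it. Write $\sigma:A\multimap B$, $\sigma':B\multimap C$, $\tau:A'\multimap B'$, $\tau':B'\multimap C'$, so that both sides are subsets of $\mathcal{L}(A\otimes A'\multimap C\otimes C')$. Since a move of $A\otimes A'$ is by definition a move of $A$ or of $A'$, a sequence over the three ``tensored'' arenas is just a sequence over the six arenas $A,A',B,B',C,C'$, and restrictions such as $w\keep(A\multimap B)$ make sense.

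Let $W$ be the set of sequences $w$ over the moves of $A,A',B,B',C,C'$ such that $w\keep(A\multimap B)\in\sigma$, $w\keep(B\multimap C)\in\sigma'$, $w\keep(A'\multimap B')\in\tau$, $w\keep(B'\multimap C')\in\tau'$, and both $w\keep(A\otimes A'\multimap B\otimes B')$ and $w\keep(B\otimes B'\multimap C\otimes C')$ are legal plays. I would first observe that $(\sigma\otimes\tau);(\sigma'\otimes\tau') = W\keep(A\otimes A'\multimap C\otimes C')$: unfolding the definitions of $\otimes$ on strategies, of $||$, and of $;$, an element of $(\sigma\otimes\tau)||(\sigma'\otimes\tau')$ is exactly a sequence over $A\otimes A', B\otimes B', C\otimes C'$ whose restriction to $A\otimes A'\multimap B\otimes B'$ is a legal play restricting to $\sigma$ on $A\multimap B$ and to $\tau$ on $A'\multimap B'$, and whose restriction to $B\otimes B'\multimap C\otimes C'$ is a legal play restricting to $\sigma'$ and $\tau'$ — that is, precisely an element of $W$ — and $;$ then takes the $A\otimes A'\multimap C\otimes C'$ restriction.

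I would then show $(\sigma;\sigma')\otimes(\tau;\tau') = W\keep(A\otimes A'\multimap C\otimes C')$ by double inclusion. For one direction, given $w\in W$, put $u=w\keep(A\cup B\cup C)$ and $v=w\keep(A'\cup B'\cup C')$; then $u\in\sigma||\sigma'$ and $v\in\tau||\tau'$, so $u\keep(A\multimap C)\in\sigma;\sigma'$ and $v\keep(A'\multimap C')\in\tau;\tau'$, while $w\keep(A\otimes A'\multimap C\otimes C')$ is a legal play (the composition of legal plays, cf.~the closure-under-composition theorem) whose restrictions to $A\multimap C$ and $A'\multimap C'$ are $u\keep(A\multimap C)$ and $v\keep(A'\multimap C')$; hence it lies in $(\sigma;\sigma')\otimes(\tau;\tau')$. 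Conversely, given $P$ in $(\sigma;\sigma')\otimes(\tau;\tau')$, choose $u\in\sigma||\sigma'$ and $v\in\tau||\tau'$ with $u\keep(A\multimap C)=P\keep(A\multimap C)$ and $v\keep(A'\multimap C')=P\keep(A'\multimap C')$, and ``zip'' $u$ and $v$ into a sequence $w$ over the six arenas whose $A,A',C,C'$-moves occur in the order they do in $P$ and whose $B$- and $B'$-moves are inserted so as to keep timings non-decreasing; then $w\in W$ and $w\keep(A\otimes A'\multimap C\otimes C')=P$.

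The main obstacle is verifying that the sequences produced by these zippings (and by the flattening in the first equality) really do meet the play conditions on the tensor arenas — distinctness, unique justifiers and answers, the one-alternative-per-$\asymp$-class condition, and above all consistency with the arena precedence $\prec$. The delicate part is $\prec$, because on a tensor arena the timing clause of Def.~\ref{def:arena} imposes an order between components: a move in one component with strictly smaller timing must precede any move in the other. The leverage is that timing is a single global clock valued in $[0,1]$ attached to the arena, so $P$ and the arrow-restrictions of $u$ and $v$ are each consistent with their own precedence, and any cross-component ordering forced in the zip factors through a chain of orderings each coming from $P$, $u$ or $v$ — hence is already timing-monotone — while the remaining freedom (the relative order of $B$- and $B'$-moves, and of moves of equal timing) is resolved in favour of timing and $\prec$ exactly as in the closure-under-composition proof. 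Once both sides are identified with $W\keep(A\otimes A'\multimap C\otimes C')$, the theorem follows.
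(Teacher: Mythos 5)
Your proof is essentially correct, but it takes a genuinely different route from the paper. The paper does not build a four-way interaction; instead it decomposes bifunctoriality into three whiskering identities, $(\sigma\otimes id);(\sigma'\otimes id)=(\sigma;\sigma')\otimes id$, $(id\otimes\tau);(id\otimes\tau')=id\otimes(\tau;\tau')$, and the middle-interchange law $(\sigma\otimes id);(id\otimes\tau)=\sigma\otimes\tau=(id\otimes\tau);(\sigma\otimes id)$, and declares each of these ``obvious upon replacing $;$ and $\otimes$ with their definitions''; bifunctoriality then follows by writing each tensor of morphisms as a composite of whiskerings and using the already-established associativity of composition. Your approach instead mirrors the paper's associativity proof: one global set $W$ of six-arena interactions, with both sides exhibited as $W\keep(A\otimes A'\multimap C\otimes C')$. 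What the paper's decomposition buys is that each sub-identity involves at most one nontrivial composition, so the definitional unfolding is short; what your approach buys is that it makes explicit the interleaving argument that the paper's ``obvious'' actually conceals --- in this timed setting even $(\sigma\otimes id);(id\otimes\tau)=\sigma\otimes\tau$ needs a zip of two timing-monotone sequences respecting the cross-component precedences that the timing clause of the arena definition induces on a tensor. Your identification of where the difficulty sits, and your reason it is resolvable (cross-component precedence in a disjoint union is generated only by \emph{strict} timing inequalities, since enabling, $\asymp$, and the fork--join clause never cross components, so a timing-monotone merge with ties broken per component is always consistent), is the right one; it would strengthen the write-up to also note that the tie-breaking constraints coming from $P$, $u$ and $v$ are jointly satisfiable precisely because $u\keep(A\multimap C)=P\keep(A\multimap C)$ and $v\keep(A'\multimap C')=P\keep(A'\multimap C')$, so no two of them ever disagree about the order of a common pair of moves.
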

\begin{proof}
  We can decompose this condition into three simpler conditions, $(\sigma\otimes id_B);(\sigma'\otimes id_B)=(\sigma;\sigma')\otimes id_B$, $(id_A\otimes\tau);(id_A\otimes\tau')= id_A\otimes(\tau;\tau')$, and $(\sigma\otimes id_B);(id_A\otimes\tau)=(\sigma\otimes\tau) = (id_A\otimes\tau);(\sigma\otimes id_B)$. Each of these conditions becomes obvious upon replacing $;$ and $\otimes$ with their definitions.
\end{proof}
\begin{theorem}[Unit object]
  With $I$ as the empty arena, $A\otimes I\simeq A\simeq I\otimes A$ for all arenas $A$.
\end{theorem}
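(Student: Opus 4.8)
The plan is to note that, because $I$ has no moves, $A\otimes I$ is the arena $A$ up to the canonical renaming of moves, and then to realise the claimed isomorphism as the copycat strategy along that renaming; the $I\otimes A$ case is symmetric.

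\emph{Arena level.} Unfolding the definition of $\otimes$ on arenas (componentwise disjoint union) with $M_I=\emptyset$ gives $M_{A\otimes I}=M_A\uplus\emptyset$, so the left injection is a bijection $\phi:M_{A\otimes I}\to M_A$. I would then check component by component that $\phi$ carries $\tau,\lambda,E,\vdash,\asymp$ of $A\otimes I$ exactly onto those of $A$; each check is immediate since $\otimes$ acts on every component by disjoint union with the empty set. Hence $\phi$ is an isomorphism of arenas, and in particular $t_M(A\otimes I)=t_M(A)$ and $t_m(A\otimes I)=t_m(A)$; so whenever the arrow arena $A\multimap A$ is defined --- i.e.\ $t_M(A)\subseteq t_m(A)$, exactly the side condition needed to form $id_A$ in Def.~\ref{def:cc} --- the arrow arenas $A\otimes I\multimap A$ and $A\multimap A\otimes I$ are defined too. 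For $I\otimes A$ replace the left injection by the right.

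\emph{Categorical level.} Transport the copycat along $\phi$: let $c_\phi:A\otimes I\multimap A$ be the set of plays $P$ such that, for every move $m$ of $A\otimes I$, the copy of $m$ in the left component occurs in $P$ iff the copy of $\phi(m)$ in the right component does, with the right-component copy occurring first exactly when $\phi(m)$ is an $\mathsf O$-move of $A$ --- this is Def.~\ref{def:cc} verbatim with $\phi$ inserted. That $c_\phi$ is a deadlock-free strategy follows by the same reasoning used for $id_A$ in the Identity theorem: responsiveness and saturation hold because $c_\phi$ is the set of \emph{all} plays of $A\otimes I\multimap A$ avoiding a fixed family of ``$\mathsf O$-after-$\mathsf P$'' reorderings, and a witnessing precedence relation is obtained by pulling back along $\phi$ the one exhibited for $id_A$. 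Let $c_{\phi^{-1}}:A\multimap A\otimes I$ be defined symmetrically.

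Finally I would verify $c_\phi;c_{\phi^{-1}}=id_{A\otimes I}$ and $c_{\phi^{-1}};c_\phi=id_A$ by the same computation the Identity theorem uses for $id_A;\sigma=\sigma$: in the interaction $c_\phi\,||\,c_{\phi^{-1}}$ each move of the shared inner $A$-copy is matched via $\phi$ by the corresponding move of the two outer copies, with $\mathsf O$-moves first, so hiding the inner copy leaves precisely the copycat on the remaining copies; both inclusions use the ``expand each $\mathsf O$-move $o$ to $o::o$ and each $\mathsf P$-move $p$ to $p::p$'' trick of that proof. The $I\otimes A$ clauses are proved identically. I do not anticipate a genuine obstacle: the only point demanding care is the on-the-nose bookkeeping that $\otimes I$ is the identity on each of the six arena components, so that $\phi$ really is an arena isomorphism and the causality side conditions are inherited --- after which everything reduces to facts already established for ordinary copycats in Def.~\ref{def:cc} and the Identity theorem.
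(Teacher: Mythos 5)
Your proposal is correct and takes the same route as the paper, whose entire proof is the one-line observation that $I$ has no moves, so disjoint union with it yields an arena isomorphic to $A$. You simply spell out the details the paper leaves implicit (the component-by-component check and the transported copycat witnessing the isomorphism categorically), which is fine but not a different argument.
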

\begin{proof}
  $I$ has no moves, so its disjoint union with any arena is isomorphic to that arena.
\end{proof}
This proves that $\mathcal G$ is a monoidal category. Proving it to be also symmetric and closed requires proving several coherence constraints, but each of these constraints are requirements that relabelings are natural isomorphisms (which is obviously true), or that relabelings of the identity commute (which is also obviously true).

\allowdisplaybreaks 
The fact that $\cdot$ is a proper functor is immediate. We  prove that Eqn.~\ref{eq:sro} holds for $\mathbb
N[\mathrm{Aff}_1^c]$ and $\mathcal G$:
\begin{theorem}
  $(J\cattimes K)\cdot\sigma = J\cdot(K\cdot\sigma)$ for $J,K \in \mathbb
N[\mathrm{Aff}_1^c]$ and $\sigma$ a strategy.
\end{theorem}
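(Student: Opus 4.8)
The plan is to reduce everything to the single-transformation case together with a bookkeeping argument about copies. First I would record that the action of a single $x\in\aff$ on an arena $A$ leaves $M,E,\lambda,\vdash$ and $\asymp$ untouched and only replaces the timing function $\tau$ by $\tau_x$, where $x\times\left(\begin{smallmatrix}\tau(m)\\1\end{smallmatrix}\right)=\left(\begin{smallmatrix}\tau_x(m)\\1\end{smallmatrix}\right)$; the functor action on a strategy (a set of plays over this move set) is the evident one induced by this retiming. Since the matrix product is associative, $(x\times y)\times\left(\begin{smallmatrix}\tau(m)\\1\end{smallmatrix}\right)=x\times\bigl(y\times\left(\begin{smallmatrix}\tau(m)\\1\end{smallmatrix}\right)\bigr)$, so $(x\times y)\cdot A=x\cdot(y\cdot A)$ and likewise $(x\times y)\cdot\sigma=x\cdot(y\cdot\sigma)$. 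I would also note two trivial facts: retiming distributes over disjoint union, $x\cdot(A\uplus A')=(x\cdot A)\uplus(x\cdot A')$ (and correspondingly for the interleaving of strategies), because disjoint union merely combines move sets, labels, enablings and $\asymp$; and, by Def.~\ref{def:schedact}, the schedule action $J\cdot(-)$ is nothing but an indexed disjoint union of single-transformation actions. By the lemma preceding Def.~\ref{def:play} both sides of the theorem are strategies on legitimate arenas, so we only need to identify them.

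Combining these observations I would compute
\begin{align*}
J\cdot(K\cdot A) &= \biguplus_{y\in\aff}\biguplus_{n\leq J(y)} y\cdot\Bigl(\biguplus_{z\in\aff}\biguplus_{m\leq K(z)} z\cdot A\Bigr)\\
&= \biguplus_{y,z\in\aff}\ \biguplus_{n\leq J(y)}\ \biguplus_{m\leq K(z)} (y\times z)\cdot A,
\end{align*}
using distributivity of $y\cdot(-)$ over $\biguplus$ and single-transformation associativity. I would then re-index this iterated disjoint union by the product value $x=y\times z$: for fixed $x$, every pair $(y,z)$ with $y\times z=x$ contributes $J(y)\times K(z)$ copies of the one and the same arena $x\cdot A$, so the total number of copies of $x\cdot A$ is $\sum_{y\times z=x}J(y)\times K(z)$, which by Eq.~\ref{eq:mul} equals $(J\cattimes K)(x)$. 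Hence $J\cdot(K\cdot A)=\biguplus_{x\in\aff}\biguplus_{k\leq(J\cattimes K)(x)}x\cdot A=(J\cattimes K)\cdot A$. Applying the same re-indexing bijection componentwise to plays gives $J\cdot(K\cdot\sigma)=(J\cattimes K)\cdot\sigma$, since on the copy coming from $(y,n,z,m)$ both sides retime $\sigma$ by $y\times z$ and this is exactly the retiming of the matching $x$-copy on the right.

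The one genuine point of care is the treatment of ``distinct copies of the same arena'': the equalities above hold only once one fixes a convention (explicit tags, or a nominal presentation) that identifies the index set $\{(y,n,z,m)\mid n\leq J(y),\,m\leq K(z)\}$ with $\{(x,k)\mid k\leq(J\cattimes K)(x)\}$, exactly the formalisation flagged after Def.~\ref{def:schedact}. I expect this copy-bookkeeping — exhibiting the bijection and checking it is respected by the retiming, the labelling, the enabling relation and $\asymp$ on each copy, and hence by the plays — to be the only real obstacle; everything else is forced by associativity of matrix multiplication. A minor side remark worth including is that $J\cattimes K$ is again finitely supported (each sum $\sum_{y\times z=x}J(y)K(z)$ is finite because $J,K$ are), so the right-hand side is a bona fide schedule and Def.~\ref{def:schedact} genuinely applies to it.
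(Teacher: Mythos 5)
Your proposal is correct and follows essentially the same route as the paper: unfold the schedule action as an iterated disjoint union, use associativity of single affine transformations ($x\cdot(y\cdot A)=(x\times y)\cdot A$), and re-index by the product using the definition of $\cattimes$ in the semigroup semiring. Your additional remarks on the copy-bookkeeping and on finite support are sensible elaborations of points the paper leaves implicit.
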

\begin{proof}
  \begin{align*}
    J\cdot(K\cdot\sigma) &= J\cdot\biguplus_{x\in
\mathrm{Aff}_1^c}\biguplus_{n\leq K(x)} x\cdot A\\
    &= \biguplus_{x,y\in\mathrm{Aff}_1^c}\biguplus_{m\leq
J(x)}\biguplus_{n\leq K(y)} x\cdot (y\cdot A)\\
    &= \biguplus_{x,y\in\mathrm{Aff}_1^c}\biguplus_{n\leq J(x)K(y)} (x
\times y)\cdot A\\
    &= \biguplus_{x\in\mathrm{Aff}_1^c}\biguplus_{n\leq (J\cattimes K)(x)}
x\cdot A
  \end{align*}
\end{proof}

\section{Conclusion}

We have presented a bounded affine type system using an abstract resource semiring and gave a generic type inference and coherent categorical semantics for it. To illustrate its flexibility we used it to give a precise timing discipline to a recursion-free functional programming language with local state defined using a game-semantic model. The first, more theoretical, part of the paper is motivated by our desire to generalize our previous work on resource-sensitive type systems (such as SCC) and the results should be broadly applicable to many such systems. The second part is a highly non-trivial motivating application of the theory where schedules of execution are treated as a resource, and is driven by our interest in enhancing the \emph{Geometry of Synthesis} hardware compiler with transparent, automatic pipelining. It is hopefully obvious that the use of a generic type system and categorical semantics imposes a high degree of abstract discipline which is essential in managing a complex type system and its interpretation in a correspondingly complex semantics.

For future work, carefully injecting some data-dependencies into the type system would be highly desirable. Full-blown data dependency, especially in the presence of recursion, would make automatic type inference unfeasible. This goes beyond a mere decidability result. In type inference we outsource the heavy lifting to an external SMT solver and, so long as it can \emph{attempt} to solve the associated system of constraints with a decent chance of success (as determined by practical experiments) we are content. But when failure of inference due to computability issues is a matter of course (see e.g.~\cite{DBLP:conf/popl/LagoP13}) then it means that the type system is overly ambitious. Fortunately there is room for an interesting middle ground. To stay in the concrete context of precise timing, access to resources can be data dependent in (logically) simple ways even in the absence of recursion. An example is that of caching behavior: requesting an item of data the second consecutive time can be accomplished much faster than the first time around. 

The game semantics of Sec.~\ref{sec:gamhort} introduced a number of innovations which deserve to be studied in more depth.We did not attempt to prove (or even formulate) \emph{definability} in timed games, which is an interesting question. Also, although our game model is formulated for the concrete programming language directly, it is quite clear that much of its formulation is independent of the particular choice of resource semiring. The only place where the choice of the resource semiring (schedules) is important is in the Arena definition, Def.~\ref{def:arena}(\ref{def:arenat}), in which move ordering needs to be consistent with timing. A relaxation of this rule may lead to a generic  game model of the abstract type system. 

Finally, an efficient implementation of the pipelining mechanism in the hardware compiler requires the exploration of several possible ways in which detailed knowledge of timing can be exploited. The current implementation of the hardware compiler\footnote{See \url{http://veritygos.org}} is not compatible with pipelining because the circuit implementing contraction ($\delta$) can only be used sequentially. The new scheduled contraction operator $\delta_{J,K}$ on the other hand can be used concurrently and can be given a finite-state implementation. The sizes of schedules ($J,K$) is known and finite and so is the order in which signals arrive, therefore their order can be used to determine signal routing. 

On the other hand, the timing information at our disposal is now much richer than simply knowing the order of events in the pipelines. We have full knowledge of the timing of each event; our timing is relative, but computing absolute timings from the relative timing information is quite easy. This means that our locally-synchronous-globally-asynchronous handshake protocol between components can be replaced by a globally-synchronous communication paradigm. Control signals indicating when data is available are now redundant, since this information is available at compile-time. Removing the handshake infrastructure is an interesting and appealing idea, but it is difficult to predict if it will lead to any performance improvements, since a new global clocking infrastructure needs to replace it. We will examine these questions in the near future.

\paragraph{Acknowledgment.}
Sec.~\ref{sec:cf} benefited significantly from discussions with Steve Vickers. Olle Fredriksson and Fredrik Nordvall-Forsberg provided useful comments. The authors express gratitude for their contribution. 

\bibliographystyle{apalike}
\bibliography{manual}

\end{document}